\def\u{\mathbf{u}}
\def\v{\mathbf{v}}
\def\w{\mathbf{w}}
\def\T{\mathbf{T}}
\newtheorem{observation}{Observation}
\title{Transition Property For Cube-Free Words}
\titlerunning{Transition Property For Cube-Free Words}%optional, please use if title is longer than one line
\author{Elena A. Petrova}{Ural Federal University, Ekaterinburg, Russia}{elena.petrova@urfu.ru}{}{Supported by the Russian Science Foundation, grant 18-71-00043}%mandatory, please use full name; only 1 author per \author macro; first two parameters are mandatory, other parameters can be empty.
\author{Arseny M. Shur}{Ural Federal University, Ekaterinburg, Russia}{arseny.shur@urfu.ru}{}{Supported by the Russian Ministry of Education and Science, project 1.3253.2017}
\authorrunning{E. A. Petrova and A. M. Shur}%mandatory. First: Use abbreviated first/middle names. Second (only in severe cases): Use first author plus 'et al.'
\subjclass{ Mathematics of computing $\rightarrow$ Discrete mathematics $\rightarrow$ Combinatorics $\rightarrow$ Combinatorics on words}
\keywords{Power-free word, cube-free word, extendable word, transition property}%mandatory
\begin{document}

\maketitle

\begin{abstract}
We study cube-free words over arbitrary non-unary finite alphabets and prove the following structural property: for every pair $(u,v)$ of $d$-ary cube-free words, if $u$ can be infinitely extended to the right and $v$ can be infinitely extended to the left respecting the cube-freeness property, then there exists a ``transition'' word $w$ over the same alphabet such that $uwv$ is cube free. The crucial case is the case of the binary alphabet, analyzed in the central part of the paper.

The obtained ``transition property'', together with the developed technique, allowed us to solve cube-free versions of three old open problems by Restivo and Salemi. Besides, it has some further implications for combinatorics on words; e.g., it implies the existence of infinite cube-free words of very big subword (factor) complexity. 
\end{abstract}

%\newpage
\section{Introduction}

The concept of power-freeness is in the center of combinatorics on words. This concept expresses the restriction on repeated blocks (factors) inside a word: an $\alpha$-power-free word contains no block which consecutively occurs in it $\alpha$ or more times. For example, the block \texttt{an} in the word \texttt{banana} is considered as having $5/2$ consecutive occurrences; thus the word \texttt{banana} is 3-power-free (cube-free) but not $(5/2)$-power-free or $2$-power-free (square-free); the block \texttt{mag} in the word \texttt{magma} occurs consecutively $5/3$ times, and so \texttt{magma} is square-free. Power-free words and languages are studied in lots of papers starting with the seminal works by Thue \cite{Thue06, Thue12}, who proved, in particular, the infiniteness of the sets of binary cube-free words and ternary square-free words. However, many phenomena related to power-freeness are still not understood. 

One group of problems about power-free words concerns their structure and extendability. In 1985, Restivo and Salemi presented \cite{ReSa85a} a list of five problems, originally considered only for ternary square-free words and binary overlap-free words, but equally important for every power-free language. Suppose that a finite alphabet $\Sigma$ is fixed and we study $\alpha$-power-free words over $\Sigma$. Here are the problems.\\
\emph{Problem 1.} Given an $\alpha$-power-free word $u$, decide whether there are infinitely many $\alpha$-power-free words having (a) the prefix $u$; or (b) the suffix $u$; or (c) the form $vuw$, where $v$ and $w$ have equal length. (Such words $u$ are called, respectively, right extendable, left extendable, and two-sided extendable.)\\ 
\emph{Problem 2.} Given an $\alpha$-power-free word $u$, construct explicitly an  $\alpha$-power-free infinite word having $u$ as prefix, provided that $u$ is right extendable.\\
\emph{Problem 3.} Given an integer $k\ge 0$, does there exist an  $\alpha$-power-free word $u$ with the properties (i) there exists a word $v$ of length $k$ such that $uv$ is $\alpha$-power free and (ii) for every word $v'$ of bigger length, $uv'$ is not $\alpha$-power free.\\
\emph{Problem 4.} Given two $\alpha$-power-free words $u$ and $v$, decide whether there is a ``transition'' from $u$ to $v$ (i.e., does there exist a word $w$ such that $uwv$ is $\alpha$-power free).\\
\emph{Problem 5.} Given two $\alpha$-power-free words $u$ and $v$, find explicitly a transition word $w$, if it exists.

These natural problems appear to be rather hard. Only for Problem~1a,b there is a sort of a general solution: a backtracking decision procedure exists for all $k$-power-free languages, where $k\ge2$ is an integer \cite{Cur95,CuSh03}. In a number of cases, the parameters of backtracking were found by computer search, so it is not clear whether this technique can be extended for $\alpha$-power-free words with rational $\alpha$. The decision procedure also gives no clue to Problem~2. 

There is a particular case of binary overlap-free words, for which all problems are solved in \cite{Car84,ReSa85a} (more efficient solutions were given in \cite{Car93}). These words have a regular structure deeply related to the famous Thue-Morse word, and it seems that all natural algorithmic problems for them are solved. For example, the asymptotic order of growth for the binary overlap-free language is computed exactly \cite{GuPr13,JPB09}, and even the word problem in the corresponding syntactic monoid has a linear-time solution \cite{Sh12sem}. Most of the results can be extended, with additional technicalities, to binary $\alpha$-power-free words for any $\alpha\le 7/3$, because the structure of these words is essentially the same as of overlap-free words (see, e.g., \cite{KaSh04}). 
However, the situation changes completely if we go beyond the polynomial-size language of binary $(7/3)$-power-free words. In the exponential-size $\alpha$-power-free languages\footnote{For $\alpha> 7/3$, the language of binary $\alpha$-power-free words has exponential size \cite{KaSh04}. \textit{The exponential conjecture} says that for $k\ge 3$ \textit{all} infinite power-free languages over $k$ letters have exponential size. This conjecture is proved for $k\le 10$ \cite{KoRa11,Och06} and odd $k$ up to 101 \cite{TuSh12}.} the diversity of words is much bigger, so it becomes harder to find a universal decision procedure. The only results on Problems~1-5 apart from those mentioned above are the positive answers to Problem~3 (including its two-sided analog) for the two classical test cases: for ternary square-free words \cite{PeSh12} and for binary cube-free words \cite{PeSh12a}. 

In this paper, we study cube-free words over arbitrary alphabets. Still, the crucial case is the  one of the binary alphabet; the central part of the paper is the proof of the following transition property of binary cube-free words.

\begin{theorem} \label{t:uwv}
For every pair $(u,v)$ of binary cube-free words such that $u$ is right extendable and $v$ is left extendable, there exists a binary word $w$ such that $uwv$ is cube free.
\end{theorem}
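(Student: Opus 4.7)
The plan is to attack the theorem in three stages: normalize the boundary behavior at the right end of $u$ and the left end of $v$, reduce the problem to a finite case analysis, and finally construct a transition word in each case.

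\textbf{Stage 1 (normalization).} Since $u$ is right extendable, it is a prefix of some infinite cube-free binary word. By a recurrence / pigeonhole argument applied to that infinite extension, within uniformly bounded distance past $u$ one must encounter a factor taken from a fixed finite ``canonical'' list. I would therefore replace $u$ by a bounded right extension of itself (still cube free and still right extendable) whose suffix is one of finitely many canonical suffixes $s$, and symmetrically replace $v$ by a bounded left extension whose prefix is one of finitely many canonical prefixes $p$.

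\textbf{Stage 2 (reduction to finitely many cases).} After this normalization it suffices to treat each pair $(s,p)$ separately. For each such pair I would exhibit a single transition word $w_{s,p}$ with the property that $s\,w_{s,p}\,p$ is cube free \emph{and} that any cube which could appear in $u\,w_{s,p}\,v$ must be confined to a bounded window around the two junctions -- so no cube of large period can straddle from $u$ across $w_{s,p}$ into $v$ whatever the earlier letters of $u$ or later letters of $v$ are.

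\textbf{Stage 3 (explicit construction).} For each pair I would produce $w_{s,p}$ from a fixed uniformly recurrent infinite cube-free binary word $\T$ (a morphic fixed point of a convenient Thue-style morphism): locate an occurrence of $s$ in $\T$ followed, after some gap, by an occurrence of $p$, and let $w_{s,p}$ be the factor of $\T$ between them. Then $s\,w_{s,p}\,p$ is itself a factor of $\T$ and hence automatically cube free.

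The main obstacle is the boundary condition imposed in Stage 2: an offending cube in $u\,w_{s,p}\,v$ could in principle have period comparable to $|u|+|w|+|v|$ and span all three pieces simultaneously. Ruling out such long-period straddling cubes requires a structural lemma to the effect that once the canonical suffix $s$ and canonical prefix $p$ are fixed, any cube-free extension to the left of $s$ and to the right of $p$ is ``compatible'' with $w_{s,p}$ in the sense of not producing any new cube. This is the combinatorial core of the argument, and it is precisely the binary alphabet that makes it delicate: with only two letters, cube-free words are rigid, short factors coincide constantly, and verifying non-interference at the junctions will demand a careful case analysis on the short factors and local periods near the boundaries -- very likely assisted by explicit computation for the canonical lists and the library of transition words.
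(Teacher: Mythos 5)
Your Stage~3 is essentially the paper's Lemma~\ref{l:tmred}: embed a suffix of $u$'s extension and a prefix of $v$'s extension into the Thue--Morse word $\T$ and read the transition off $\T$. But Stages~1 and~2 contain the genuine gap, and in the form you state them they are false, not merely unproven. You want a \emph{fixed finite} list of canonical suffixes $s$ and prefixes $p$, reachable within \emph{uniformly bounded} distance past $u$, together with a finite library of words $w_{s,p}$ that work ``whatever the earlier letters of $u$ are.'' No such library can exist: a cube in $uw_{s,p}v$ may have period up to about $(|u|+|w|+|v|)/3$, so whether a given continuation creates a cube depends on unboundedly distant history in $u$, not on any bounded suffix. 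Concretely, for any fixed $w_{s,p}$ there are right-extendable words $u$ ending in $s$ whose admissible continuations are forced for a long stretch (the paper's Figure~\ref{f:tree} exhibits a length-76 right-extendable word \emph{all} of whose infinite right contexts begin with the marker $aabaa$; longer examples force longer stretches), and a fixed $w_{s,p}$ cannot respect all such long-range forcings. This is exactly why cube-freeness beyond the $7/3$-power threshold lacks the finite-memory structure that makes the overlap-free case easy, and it is the obstacle you flag at the end but defer to an unnamed ``structural lemma.''

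The paper's resolution is structurally different on precisely this point: the transition word is \emph{not} taken from a finite library but grows linearly with $|u|$ and $|v|$. Lemma~\ref{l:tmred} requires $u$ to possess a right context that is a Thue--Morse factor of length $2|u|$; then any straddling cube must contain that whole overlap-free context, and a Fine--Wilf-style length argument pins its period to exactly $|u|$ and derives a contradiction. The real work --- the content your ``structural lemma'' would need --- is showing that every right-extendable word has such long quasi-Thue--Morse contexts (T-extendability). That is Theorem~\ref{t:markers}: every right-extendable cube-free binary word has an infinite right context with only finitely many markers ($aabaa$, $ababa$, $babab$, $bbabb$), proved by contradiction via Lemma~\ref{l:log}, which uses Fine--Wilf to bound the length of a chain of distinct-period ``semi-contexts'' logarithmically. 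Your pigeonhole normalization would have to be replaced by this marker analysis, where the bound is on the \emph{number of markers} crossed (not on the length of the extension), and the case analysis happens around markers inside potential cubes rather than over a finite $(s,p)$ table. So the proposal is not a different correct route; it is missing the theorem that makes the $\T$-embedding applicable, and the finite-case shortcut it proposes instead cannot be repaired as stated.
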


After proving Theorem~\ref{t:uwv} in Section~3, we use it and its proof to derive further results. In Section~4 we prove the transition property for  arbitrary alphabets (Theorem~\ref{t:uwv3}), while in Section~5 we use this property to solve Restivo-Salemi Problems~2,~4, and 5. Thus, all Restivo--Salemi problems for binary cube-free words are solved; this is the first fully solved case since the original publication of the problems. For cube-free words over bigger alphabets, only Problem~3 is not yet solved.

We finish the introduction with two remarks. First, the result of Theorems~\ref{t:uwv} and~\ref{t:uwv3} was conjectured, in a slightly weaker form, for all infinite power-free languages \cite[Conj 1]{Sh09dlt}. This conjecture is related to the properties of finite automata recognizing some approximations of power-free languages and was supported by extensive numerical studies. The transition words can be naturally interpreted as transitions in those automata and the transition property forces the automata to be strongly connected. Second, recently it was shown \cite[Thm 39]{ShSh18} that the transition property implies the existence of infinite $\alpha$-power-free words of very big subword complexity. Namely, Theorems~\ref{t:uwv} and~\ref{t:uwv3} imply that for every $d\ge 2$ there exists a $d$-ary cube-free infinite word which contains \textit{all} two-sided extendable $d$-ary cube-free finite words as factors.

\section {Preliminaries}

\subsection{Notation and Definitions}
By default, we study words over finite alphabets $\Sigma_d$ of cardinality $d\ge2$, writing $\Sigma_d=\{a,b,c_1,\ldots,c_{d-2}\}$ (mostly we work with $\Sigma_2=\{a,b\}$).  Standard notions of factor, prefix, and suffix are used. The set of all finite (nonempty finite, infinite) words over an alphabet $\Sigma$ is denoted by $\Sigma^*$ (resp., $\Sigma^+$, $\Sigma^\infty$). Elements of $\Sigma^+$ ($\Sigma^\infty$) are treated as functions $w:\{1,\ldots,n\}\to \Sigma$ (resp., $\w:{\mathbb N}\to\Sigma$). We write $[i..j]$ for the range $i,i{+}1,\ldots,j$ of positive integers; the notation $w[i..j]$ stands for the factor of the word $w$ occupying this range as well as for the particular occurrence of this factor in $w$ \textit{at position} $i$. Note that $w[i..i]=w[i]$ is just the $i$th letter of $w$. Let  $w[i_1..j_1]$ and $w[i_2..j_2]$ be two factors of $w$. If the ranges $[i_1..j_1]$ and $[i_2..j_2]$ have a nonempty intersection, their intersection and union are also ranges; we refer to the factors of $w$, occupying these ranges, as the \textit{intersection} and the \textit{union} of $w[i_1..j_1]$ and $w[i_2..j_2]$. The word $\overleftarrow{w}=w[n]\ldots w[1]$ is called the \emph{reversal} of the word $w$ of length $n$. 

We write $\lambda$ for the empty word and $|w|$ for the length of a word $w$ (infinite words have length $\infty$). A word $w$ has \emph{period} $p<|w|$ if $w[1..|w|{-}p]=w[p{+}1..|w|]$; the prefix $w[1..p]$ of $w$ is the \textit{root} of this period of $w$. One of the most useful properties of periodic words is the following.

\begin{lemma}[Fine, Wilf \cite{FiWi65}] \label{fw}
If a word $u$ has periods $p$ and $q$ and $|u|\ge p+q-\gcd(p,q)$ then $u$ has period $\gcd(p,q)$.
\end{lemma}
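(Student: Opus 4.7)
Setting $d=\gcd(p,q)$, my plan is to prove this by strong induction on $p+q$, imitating the Euclidean algorithm on the pair $(p,q)$. The base case is $p=q$, where $d=p$ and the hypothesis already supplies the conclusion. Otherwise I may assume $p>q$; the central inductive step is to show that $u$ also admits the shorter period $p-q$. Once this is in hand, I apply the inductive hypothesis to the pair $(p-q,q)$: its gcd is still $d$, and the required length condition reads $|u|\ge (p-q)+q-d=p-d$, which is strictly weaker than the given $|u|\ge p+q-d$. The inductive hypothesis then delivers the target period $d$.

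To establish that $u$ has period $p-q$, I fix $i\in[1,|u|-(p-q)]$ and derive $u[i]=u[i+p-q]$ by composing the two given periods in two complementary ways. When $i+p\le |u|$, I chain $u[i]=u[i+p]=u[i+p-q]$ (period $p$ forward, then period $q$ backward). When $i>q$, I chain $u[i]=u[i-q]=u[i-q+p]=u[i+p-q]$ (period $q$ backward, then period $p$ forward). The first case handles $[1,|u|-p]$ and the second handles $[q+1,|u|-(p-q)]$, and together they exhaust the range $[1,|u|-(p-q)]$ as soon as $|u|\ge p+q$.

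The main obstacle is the residual gap $[|u|-p+1,\,q]$ which can appear in the strictly weaker regime $p+q-d\le |u|\le p+q-1$, and which contains up to $d$ positions. I plan to close this gap by a reduction to the coprime case $d=1$. For each $r\in\{1,\dots,d\}$ I pass to the arithmetic-progression subword $u^{(r)}=u[r]u[r+d]u[r+2d]\cdots$; a direct computation shows that $u^{(r)}$ inherits the (now coprime) periods $p/d$ and $q/d$, and that its length is at least $p/d+q/d-1$ by virtue of $|u|\ge p+q-d$. In this coprime instance the residual gap is of size at most one and can be patched by a single additional Euclidean-style chain of period applications. Applying the coprime case to each $u^{(r)}$ forces $u^{(r)}$ to be constant for every $r$, which is precisely the assertion that $u$ has period $d$.

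I expect the one-position gap closure inside the coprime case to be the delicate point of the argument, since the bound $|u|\ge p+q-1$ is tight there and the extra equality has to be produced by a carefully chosen composition of both periods rather than by a single application; everything else, including the subword reduction and the Euclidean descent, is mechanical once that ingredient is in place.
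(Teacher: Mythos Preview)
The paper does not prove this lemma at all: it is quoted verbatim from Fine and Wilf's 1965 paper and used as a black box (in the proof of Lemma~\ref{l:log}). So there is no ``paper's own proof'' to compare against.

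That said, your plan is one of the standard routes to Fine--Wilf and is sound. A small comment on the part you yourself flag as delicate. In the coprime case with $|u|=p+q-1$ the single missing equality is $u[q]=u[p]$, and trying to manufacture it by one ad~hoc chain of $\pm p,\pm q$ steps is awkward to write down uniformly. A cleaner way, which fits perfectly into your strong induction on $p+q$, is this: the prefix $v=u[1..p-1]$ inherits period $q$ trivially, and it also has period $p-q$ because for $i\le q-1$ one has $u[i]=u[i+p]=u[i+p-q]$ (both steps stay inside $u$). Since $\gcd(q,p-q)=1$ and $|v|=p-1=q+(p-q)-1$, and $q+(p-q)=p<p+q$, the induction hypothesis applied to the pair $(q,p-q)$ makes $v$ constant; then period $q$ propagates the constant value to all of $u$. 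This replaces your ``single additional Euclidean-style chain'' with a direct appeal to the induction hypothesis and removes the only soft spot in the outline. With that adjustment your Case~2a (reduce to coprime via the subwords $u^{(r)}$) and Case~2b (coprime, tight length) together with the easy Case~1 ($|u|\ge p+q$) give a complete proof.
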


A \emph{cube} is a nonempty word of the form $uuu$, also written as $u^3$; we refer to $u$ as the root of this cube. A word is \emph{cube-free} (\emph{overlap-free}) if it has no cubes as factors (resp., no factors of the form $cwcwc$, where $c$ is a letter). There exist binary overlap-free (and thus cube-free) infinite words \cite{Thue12}.

Let $\Sigma_d$ be fixed. A word $w\in\Sigma_d^*$ is called a \textit{right context} of a cube-free word $u\in\Sigma_d^*$ if $uw$ is cube free; we call $u$ \textit{right extendable} if it has an infinite right context (or, equivalently, infinitely many right contexts). Left contexts and left extendability are defined in a dual way. 

The \textit{Thue--Morse morphism} $\theta$ is defined over $\Sigma_2^+$ by the rules $\theta(a)=ab$, $\theta(b)=ba$. The fixed points of $\theta$ are the infinite \textit{Thue-Morse word}
$$
\T = abbabaabbaababbabaababbaabbabaab\cdots
$$
and its complement, obtained from $\T$ by exchanging $a$'s and $b$'s. We refer to the factors of $\T$ as \textit{Thue-Morse factors}. The word $\T$, first introduced by Thue in \cite{Thue12} and rediscovered many times, possesses a huge number of nice properties;  we need just a few. The Thue-Morse word is overlap free, \textit{uniformly recurrent} (every Thue-Morse factor occurs in $\T$ infinitely many times with a bounded gap), and \textit{closed under reversals} ($u$ is a Thue-Morse factor iff $\overleftarrow{u}$ is).

\subsection{Uniform words and markers}
We call a word $w\in\Sigma_2^*$ \textit{uniform} if $w=c\theta(u)d$ for some $c,d\in\{a,b,\lambda\}$, $u\in\Sigma_2^*$; a uniform word with $d=\lambda$ is \textit{right aligned}. Similarly, a uniform infinite word has the form $c\theta(\u)$ for $c\in\{a,b,\lambda\}$, $u\in\Sigma_2^\infty$. Such ``almost'' $\theta$-images play a crucial role in further considerations. Note that all factors and suffixes of $\T$ are uniform. The following observation is well known.
\begin{observation}
A word $u\in\Sigma_2$ is uniform iff all occurrences of factors of the form $cc$ in $u$, where $c\in\Sigma_2$, are at positions of the same parity.
\end{observation}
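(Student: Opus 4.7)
My plan is to prove both directions by examining the $2\times 2$ block structure that $\theta$ imposes. The key observation is that $\theta(a)=ab$ and $\theta(b)=ba$ each have two distinct letters, so whenever we write a word $u\in\Sigma_2^*$ as a concatenation of length-2 blocks starting at positions $1,3,5,\ldots$, any block with two equal letters must be ``misaligned'' with the $\theta$-structure.

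For the forward direction, I would take a uniform word $w=c\theta(u)d$ and locate every factor $cc$. Inside the pure $\theta$-image $\theta(u)=\theta(u[1])\theta(u[2])\cdots$, the pairs at odd starting positions $1,3,5,\ldots$ are always $ab$ or $ba$, so any $cc$ inside $\theta(u)$ must straddle a block boundary and hence start at an even position. The potential extra $cc$'s contributed by $c$ or $d$ at the endpoints are easy to check: if $c=\lambda$, a boundary $cc$ at the right end (formed by the last letter of $\theta(u)$ and $d$) starts at the even position $|\theta(u)|=2|u|$, matching parity; if $c\in\{a,b\}$, everything shifts by one, so every $cc$ inside $\theta(u)$ now starts at an odd position, the possible $cc$ formed by $c$ and the first letter of $\theta(u)$ starts at position~$1$ (odd), and the possible $cc$ at the right end starts at position $1+2|u|$ (also odd).

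For the converse, suppose every $cc$ in $u$ starts at positions of a single parity $\pi\in\{\text{even},\text{odd}\}$. Set $c=\lambda$ if $\pi$ is even and $c=u[1]$ if $\pi$ is odd, so that the remaining suffix begins at an odd position relative to its own indexing. Then split that suffix into consecutive length-2 blocks (with a possible leftover letter $d$ at the end). Each such block starts at a position that is odd in $u$'s indexing when $c=\lambda$, and even when $c\in\{a,b\}$; in either case, that position is \emph{not} the parity $\pi$, so the block cannot be $aa$ or $bb$ and must be $ab$ or $ba$. Hence the suffix (minus the trailing letter if length is odd) is of the form $\theta(v)$, and $u=c\theta(v)d$ is uniform.

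The only subtle step is making sure the boundary $cc$-factors (those involving the initial $c$ or the final $d$) do not violate the parity condition in the forward direction; this is a short case analysis on whether $c$, $d$ are empty or single letters and on the parity of $|u|$. The converse is essentially a bookkeeping argument once the right splitting is chosen, with the degenerate cases $|u|\leq 1$ handled trivially since the parity condition is then vacuous and $u$ is obviously uniform.
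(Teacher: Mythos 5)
Your proof is correct; the paper itself offers no argument for this observation (it is stated as ``well known''), and your two-direction block-alignment argument---$cc$ factors inside $\theta(v)$ must straddle block boundaries, and conversely a consistent parity lets you peel off at most one letter on each side and parse the rest into $ab/ba$ blocks---is exactly the standard proof that fills this gap, with the boundary cases for $c,d\in\{a,b,\lambda\}$ handled properly. One small remark: alternating words of any length (not just $|u|\le 1$) contain no $cc$ factor, so the parity hypothesis is vacuous for them; your converse still covers this case, since with no $cc$ occurrences the parity $\pi$ may be chosen arbitrarily and no length-2 block can equal $aa$ or $bb$, so the parsing into $\theta$-blocks goes through unchanged.
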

Thus the word is non-uniform iff it has the factors of the form $cc$ occurring in positions of different parity. The following observation is straightforward.
\begin{observation}
A cube-free word $u\in\Sigma_2$ is non-uniform iff it contains at least one of the factors $aabaa$, $aababaa$, $bbabb$, $bbababb$.
\end{observation}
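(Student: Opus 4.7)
The plan is to carry out a short block-analysis argument, using the fact that in a cube-free binary word every maximal run of equal letters (which I will call a \emph{block}) has length $1$ or $2$, and that a length-$2$ block starting at position $i$ is exactly an occurrence of $cc$ at position $i$.

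For the easy direction (containing one of the four factors implies non-uniformity): I would just observe that in each of $aabaa$, $bbabb$ the two occurrences of $cc$-type factors start at positions $1$ and $4$, while in $aababaa$, $bbababb$ they start at positions $1$ and $6$; in both cases the parities differ, so by the previous observation the enclosing word is non-uniform.

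For the reverse direction, suppose $u$ is cube-free and non-uniform. Then $u$ contains two length-$2$ blocks whose starting positions have different parity; pick such a pair $(B_1,B_2)$ minimizing the distance between them. If some length-$2$ block $B$ lay strictly between $B_1$ and $B_2$, then $B$'s parity would differ from that of $B_1$ or of $B_2$, giving a closer pair; hence $B_1$ and $B_2$ are consecutive length-$2$ blocks, separated by some number $k\ge 1$ of length-$1$ blocks. Since consecutive blocks carry opposite letters, those $k$ intermediate singletons alternate; a parity change between $B_1$ and $B_2$ then forces $k$ to be odd, and the same alternation forces $B_1$ and $B_2$ to consist of the same letter $c$. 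Thus the factor of $u$ running from the start of $B_1$ to the end of $B_2$ has the form
$$
cc(\bar c c)^{(k-1)/2}\bar c cc,
$$
where $\bar c$ denotes the other letter of $\Sigma_2$.

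The crucial step — and the only place where cube-freeness really enters — is the bound on $k$: if $k\ge 5$, then the infix of length $6$ starting at the first $\bar c$ equals $(\bar c c)^3$, a cube, contradicting cube-freeness. So $k\in\{1,3\}$, and substituting $c\in\{a,b\}$ yields exactly the four listed factors $aabaa$, $bbabb$, $aababaa$, $bbababb$. I expect no real obstacle: the argument is a finite case analysis, and its only subtlety is the clean reduction to \emph{consecutive} length-$2$ blocks via the minimal-distance choice, which isolates the cube-free hypothesis to the single inequality $k<5$.
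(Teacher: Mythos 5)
Your proof is correct: the reduction to consecutive length-$2$ blocks via the minimal-distance choice, the parity/alternation analysis forcing $k$ odd and $B_1,B_2$ to share the same letter, and the cube $(\bar c c)^3$ ruling out $k\ge 5$ are all sound, and $k\in\{1,3\}$ yields exactly the four listed factors. The paper states this observation without proof (calling it straightforward), relying on its preceding parity characterization of uniformity just as you do, so your argument simply supplies the details of the intended route.
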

All right (resp., left) contexts of the word $ababa$ begin (resp., end) with $a$, so $ababa$ occurs in a cube-free word only as a prefix/suffix or inside the non-uniform factor $aababaa$ (same argument applies to $babab$). This allows us to view binary cube-free words as sequences of uniform factors separated by \textit{markers} $aabaa, ababa, babab$, and $bbabb$, which break uniformity.

The importance of markers for the analysis of cube-free words is demonstrated by the following theorem, proved in Section~\ref{ss:T2}.

\begin{theorem} \label{t:markers}
Every right-extendable cube-free word $u\in\Sigma_2$ has an infinite right context with finitely many markers.
\end{theorem}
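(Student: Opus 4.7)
My plan is to construct the desired infinite right context by appending a suffix of the Thue--Morse word $\T$ to a short ``handoff'' extension of $u$. The first thing I would verify is that $\T$, and more generally any image $\theta(x)$ of a cube-free word $x$, contains no marker. The argument is a case analysis on the parity of the starting position of a hypothetical length-$5$ marker occurrence in $\theta(x)$: since letters of $\theta(x)$ at positions $2k{-}1$ and $2k$ are $x[k]$ and $\overline{x[k]}$ respectively, matching the marker $aabaa$ or $bbabb$ forces contradictory values at a single index of $x$, while matching $ababa$ or $babab$ forces three consecutive equal letters in $x$, which is excluded by cube-freeness. Since $\T=\theta(\T)$ and $\T$ is cube-free, $\T$ has no markers, and therefore neither does any suffix of~$\T$.

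The main step is to produce a finite cube-free extension $uw$ and an index $k$ for which $uw\cdot\T[k..\infty]$ is cube-free; then $w\cdot\T[k..\infty]$ is an infinite right context of $u$ whose markers are all contained in the finite prefix $w$. Starting from an arbitrary infinite cube-free extension $u\v_0$ of $u$, which exists by the right-extendability assumption, I would identify a sufficiently long Thue--Morse factor $s$ occurring in $\v_0$ and take $uw$ to be the prefix of $u\v_0$ ending at this occurrence. Uniform recurrence of $\T$ then provides an occurrence of $s$ at some position $k$ of $\T$, and one attaches the continuation $\T[k+|s|..\infty]$ after $uw$. To rule out cubes that straddle the junction, any such cube must have root length bounded roughly by $|s|$; choosing $|s|$ larger than this critical length, combined with Lemma~\ref{fw} and the cube-freeness of $uw$ and of the Thue--Morse tail, excludes every straddling cube.

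The main obstacle is the core combinatorial claim underlying the second step: that every right-extendable cube-free $u$ admits a cube-free extension ending with an arbitrarily long Thue--Morse factor. This is not immediate, since a priori an infinite cube-free extension could avoid long Thue--Morse factors altogether. I would expect the paper to establish this via a structural lemma on the uniform blocks that separate markers in a cube-free word: iterated $\theta$-preimages of such blocks yield shorter cube-free words, and a compactness or K\"onig-lemma argument should then allow one to steer the extension toward matching $\T$. A plausible alternative would be a ``surgical'' argument that, given any infinite cube-free extension with infinitely many markers, locally rewrites it at one marker at a time to suppress that marker while preserving both cube-freeness and right-extendability, concluding by a limit argument that some extension has only finitely many markers.
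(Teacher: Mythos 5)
Your proposal contains a genuine gap, and it sits exactly at the point you flag yourself: the claim that some cube-free extension of $u$ ends with an arbitrarily long Thue--Morse factor. This claim is essentially the whole content of the theorem, and your plan for it does not work as stated. You propose to start from an arbitrary infinite cube-free extension $u\v_0$ and ``identify a sufficiently long Thue--Morse factor $s$ occurring in $\v_0$'' --- but such a factor need not exist: an infinite cube-free word can contain markers with bounded gaps, and since every Thue--Morse factor is uniform and marker-free (your own first observation), such a word contains no long Thue--Morse factor at all. The hypothesis to be refuted is precisely that \emph{every} infinite right context of $u$ has infinitely many markers, so you cannot assume any $\v_0$ with long marker-free stretches is available. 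Note also that even a long marker-free (i.e., uniform) stretch is not automatically a Thue--Morse factor --- uniform words $c\theta(v)d$ with $v$ cube-free but not a factor of $\T$ exist --- so the handoff to $\T$ requires a separate gluing argument (this is what Lemmas~\ref{l:uniform} and~\ref{l:uniform2} do in the paper); your straddling-cube analysis via uniform recurrence is in the spirit of Lemma~\ref{l:tmred} and is fine, but it is the easy part. Your two fallback sketches do not close the gap either: K\"onig-type compactness yields existence of infinite extensions but gives no control over marker density, and ``surgical'' rewriting at a marker is not a local operation --- changing one letter can create cubes arbitrarily far away and need not preserve right-extendability, so the proposed limit argument has no foundation.

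The paper's actual proof is structurally unrelated to your construction and proceeds by contradiction with a counting argument. Assuming every infinite right context of $u$ has infinitely many markers, one factorizes $u=y_m\cdots y_2y_1$ along its markers, shows (by a case analysis appending suffixes of $\T$ after the final marker) that $u$ has two distinct uniform semi-contexts producing midi/maxi cubes with distinct periods, and observes that the resulting maxi cube forces equalities among the blocks $y_i$. Iterating --- appending marker-terminated contexts $y_0,y_{-1},\ldots,y_{1-k}$, each forcing a new cube --- one encodes the situation over the auxiliary alphabet $\Gamma=\{y_m,\ldots,y_{1-k}\}$ and invokes Lemma~\ref{l:log}, which bounds by $O(\log n)$ the length of any right context whose every prefix completes a long periodic suffix with consecutive periods distinct. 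Since the assumption would allow $k$ to be unbounded, this is a contradiction. In the paper's architecture, the ``handoff to $\T$'' that you take as your main device is derived \emph{from} Theorem~\ref{t:markers} (via Lemma~\ref{l:Text} and Lemmas~\ref{l:uniform}--\ref{l:uniform2}), not used to prove it; your proposal in effect assumes a statement equivalent to the theorem plus that reduction, so as a proof of Theorem~\ref{t:markers} it is circular at its core step.
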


\section{Proof of the Transition Property for Binary Words}

The proof of Theorem~\ref{t:uwv} consists of two stages. In the first stage we show that its result is implied by Theorem~\ref{t:markers}. In the second stage we prove Theorem~\ref{t:markers}. All words in this section are over $\Sigma_2$ if the converse is not stated explicitly.

\subsection{Reduction to Theorem~\ref{t:markers}} \label{ss:red}

\begin{lemma} \label{l:tmred}
Suppose that cube-free words $u$ and $v$ have right contexts which are Thue-Morse factors of length $2|u|$ and $2|v|$ respectively. Then there exists a word $w$ such that $uw\overleftarrow{v}$ is cube free.
\end{lemma}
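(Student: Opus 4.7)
My plan is to take $w$ to be a Thue-Morse factor that concatenates the given right contexts of $u$ and $v$. Let $x$ and $y$ denote the prescribed Thue-Morse right contexts, so $|x|=2|u|$ and $|y|=2|v|$, both $ux$ and $\overleftarrow{y}\overleftarrow{v}$ are cube free (the latter after reversing $vy$), and both $x$ and $\overleftarrow{y}$ are factors of $\T$ (the latter by closure under reversal). Using uniform recurrence of $\T$, I will locate $x$ in $\T$ and then, sufficiently later, an occurrence of $\overleftarrow{y}$, so that the Thue-Morse factor $w := x\alpha\overleftarrow{y}$ has $|\alpha|\ge 3$, giving $|w|\ge 2|u|+2|v|+3$. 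Then $uw\overleftarrow{v}=(ux)\alpha(\overleftarrow{y}\overleftarrow{v})$.

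To prove $uw\overleftarrow{v}$ is cube free I will rule out a hypothetical cube $t^3$ of length $3|t|$ at position $i$ by cases. Cubes inside $ux$ or inside $\overleftarrow{y}\overleftarrow{v}$ contradict the hypothesis, and cubes inside $w$ contradict its overlap freeness. For a cube straddling $u$ and $w$ but not reaching $\overleftarrow{v}$, the condition $|x|=2|u|$ forces $3|t|>3|u|-i+1$, hence $|t|\ge|u|-i+2$, which places the ``right'' overlap $t^3[|t|..3|t|]$ of length $2|t|+1$ inside $w$ and contradicts its overlap freeness. The symmetric case, straddling $w$ and $\overleftarrow{v}$, uses $|y|=2|v|$ and the ``left'' overlap $t^3[1..2|t|+1]$.

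The main obstacle will be the triple-spanning case, where $t^3$ begins in $u$ at some $i\le|u|$ and ends in $\overleftarrow{v}$. Here the period satisfies $|w|/3<|t|\le(|u|+|w|+|v|)/3$, and the choice $|w|\ge 2|u|+2|v|+3$ sharpens this to $|t|\le(|w|-1)/2$. The key observation is that a cube of period $|t|$ contains an overlap $c\tau c\tau c$ of length $2|t|+1$ at \emph{every} starting position $k\in[1..|t|]$, not just at the two boundary positions, because the three positions $k$, $k+|t|$, $k+2|t|$ within the cube all carry the same letter. Setting $k=|u|-i+2$ aligns such an overlap to begin exactly at the first letter of $w$ in $uw\overleftarrow{v}$, and $2|t|\le|w|-1$ then ensures it ends before $\overleftarrow{v}$ starts. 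The overlap therefore lies entirely in the overlap free $w$, the final contradiction. Exploiting these internal overlaps so as to fit one inside the overlap free Thue-Morse buffer for cubes of large period is the crux of the argument.
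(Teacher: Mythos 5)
Your proposal is correct and takes essentially the same approach as the paper: the same transition word $w=x\alpha\overleftarrow{y}$ obtained from uniform recurrence and reversal-closure of $\T$, with every hypothetical cube eliminated via overlap-freeness of $w$ together with the length constraints $|x|=2|u|$, $|y|=2|v|$. The only difference is in the bookkeeping: where the paper pinches the cube's period to exactly $|u|$ (forcing the cube to sit inside the cube-free $uu_1$), you slide a length-$(2|t|{+}1)$ overlap window of the cube so that it lands inside $w$, at the mild cost of requiring $|\alpha|\ge 3$ where the paper needs only a nonempty middle word.
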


\begin{proof}
Let $u_1$ and $v_1$ be the mentioned contexts of $u$ and $v$ respectively. Since $\T$ is recurrent and closed under reversals, there exists a Thue-Morse factor $w=u_1w_1\overleftarrow{v_1}$ for some $w_1\ne\lambda$. Assume to the contrary that $uw\overleftarrow{v}$ contains a cube of period $p$. Since $w$ is overlap free, such a cube cannot intersect $u$ and $\overleftarrow{v}$ simultaneously by the length argument. W.l.o.g. assume that the cube intersects $u$. Then it must contain the whole $u_1$. Hence $p$ is a period of $u_1$ and thus $p\ge|u_1|/2\ge |u|$ because $u_1$ is overlap free. Further, the overlap-freeness of $w$ means that $u$ contains at least the whole period of the cube, implying $p\le|u|$. So we have $|u|=p$ and the cube is a prefix of $uw\overleftarrow{v}$. But in this case $|u_1|\ge 2p$ and the cube is contained in the cube-free word $uu_1$, resulting in a contradiction.  
\end{proof} 

We say that a cube-free word $u$ is \textit{T-extendable} if it has a right context of the form $w\T[n..\infty]$ for some $w\in\Sigma^*$, $n\ge1$. By Lemma~\ref{l:tmred}, if the words $u$ and $\overleftarrow{v}$ are T-extendable, there is a word $w$ such that $uwv$ is cube-free. We analyze T-extendability in Lemmas~\ref{l:uniform}--\ref{l:Text}.

\begin{lemma} \label{l:uniform}
If a uniform cube-free word $u$ has a right context of length 3, or is right aligned and has a right context of length 2, then $u$ is T-extendable.
\end{lemma}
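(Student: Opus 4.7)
The plan is to bootstrap the given short right context into a right context that is a Thue--Morse factor of length at least $2|u|$, so that the gluing argument in the proof of Lemma~\ref{l:tmred} applies. Equivalently, it suffices to extend $u$ by a finite word whose suffix is a sufficiently long Thue--Morse factor; then uniform recurrence of $\T$ gives a matching occurrence in some $\T[n..\infty]$.

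First I would reduce the ``length-$3$ context'' case to the right-aligned ``length-$2$ context'' case. Write $u = c\theta(u')d$ with $c,d \in \{a,b,\lambda\}$ and suppose the context is $w_1w_2w_3$. If $d \in \{a,b\}$, I would examine whether $dw_1$ completes a $\theta$-block. If $dw_1 \in \{ab,ba\}$, then $uw_1 = c\theta(u'z)$ for the unique letter $z$ with $\theta(z) = dw_1$, and what remains is a right-aligned uniform word with residual context $w_2w_3$ of length $2$. If $dw_1 \in \{aa,bb\}$, then cube-freeness of $uw$ severely constrains the last few letters of $\theta(u')$ (for instance, the penultimate letter of $u'$ cannot equal a specific letter, lest $uw$ contain a cube), and this forced structure allows a direct construction of the T-extension.

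Next, in the right-aligned case $u = c\theta(u')$ with context $w_1w_2$, I would split on whether $w_1w_2 \in \{\theta(a),\theta(b)\} = \{ab,ba\}$. In the ``aligned'' subcase $w_1w_2 = \theta(z)$, the extended word $uw_1w_2 = c\theta(u'z)$ is still a uniform right-aligned cube-free word with a strictly longer $\theta$-core, and $u'z$ inherits cube-freeness. Iterating this morphic lifting (or, alternatively, applying induction on $|u|$ after passing to the preimage $u'$) produces an arbitrarily long suffix aligned with the $\theta$-structure; since $\theta(\T[n..\infty]) = \T[2n-1..\infty]$ and $\theta$ preserves cube-freeness, this lifts T-extendability of $u'$ to T-extendability of $u$. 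In the ``unaligned'' subcase $w_1w_2 \in \{aa,bb\}$, a parity analysis shows that $uw_1w_2$ is non-uniform (the new $xx$ block sits at an odd position while every $xx$ in $\theta(u')$ sits at an even position), and cube-freeness forces the last letter of $\theta(u')$ to be the complement of $x$. This pins down both the tail of $u$ and the shape of $w_1w_2$, and from here I would append a carefully chosen short word to land on a suffix of $\T$, verifying cube-freeness by checking that the overlap-freeness of $\T$ precludes any cube crossing the seam.

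The main obstacle is the unaligned subcase just described: the natural morphic reduction fails because $uw_1w_2$ is non-uniform and in fact contains a marker near its right end, so we cannot simply recurse on $\theta^{-1}$. Handling this requires the explicit case analysis of the last letters of $\theta(u')$ together with the overlap-freeness of Thue--Morse factors to rule out cubes spanning the transition. The base case of the induction is finite: for $|u|$ below a small explicit bound, one verifies the claim by direct inspection of the (finitely many) uniform cube-free words with a context of the required length, noting that each either is itself a Thue--Morse factor or admits a short extension to one.
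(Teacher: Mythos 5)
Your overall target (upgrade the short context to an extension ending in a long Thue--Morse factor, then glue on a suffix of $\T$) matches the spirit of the paper's construction, but the engine you propose for the upgrade does not run. In your aligned subcase you absorb the given context into the word, obtaining $uw_1w_2=c\theta(u'z)$, and then propose to ``iterate this morphic lifting'' or induct on the preimage $u'z$. But the lemma's hypothesis is the \emph{existence of a right context of prescribed length}, and after consuming the two given letters that hypothesis is spent: nothing guarantees that $uw_1w_2$ (equivalently, $u'z$) has any nonempty right context at all. Cube-freeness alone does not supply one --- there exist binary cube-free words that cannot be extended by even a single letter (this is precisely the setting of Restivo--Salemi Problem~3, answered positively for binary cube-free words in \cite{PeSh12a}; Fig.~\ref{f:tree} of the paper illustrates how severely contexts can be constrained). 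So the recursion has no fuel after its first step and the ``arbitrarily long aligned suffix'' is never produced. The paper avoids recursion entirely: from the mere existence of \emph{some} length-2 context of a right-aligned $u$ ending in $ab$, it proves by a period analysis (Claims~1 and~2 of its proof: a cube-suffix of $ua$ forces $|u|=3p-1$, whence the alternatives $ua$/$ubab$ and then $uaa$/$ubabb$) that one of the \emph{specific} words $uaa$, $ubabb$ is cube-free, and then attaches a fixed suffix of $\T$ ($\T[6..\infty]$ or $\T[20..\infty]$) in one shot. Your proposal has no substitute for this step; in particular, when the only available context is the aligned one ($ba$, say), you have nothing to work with once the recursion fails.

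There is a second gap at the seam. Your verification --- ``overlap-freeness of $\T$ precludes any cube crossing'' --- is insufficient: a cube can straddle $u$ and the appended suffix of $\T$ with period on the order of $|u|$, and overlap-freeness of the appended part alone says nothing about such cubes. The paper's actual mechanism is that after appending $\T[6..\infty]$ the infinite word contains exactly one non-uniform factor (the marker $aabaa$ or $aababaa$); any cube not killed by the cube-freeness of $uaa$ (Claim~2) must contain this unique marker occurrence, forcing its root to be $aba$ or $ababa$, and these two candidates are excluded by direct inspection. Your sketch is closest to correct in the unaligned subcase, where the given context is necessarily $aa$ (note $bb$ is impossible after $ab$) and thus coincides with what Claim~2 would provide, but without the unique-marker argument you cannot certify the infinite extension even there. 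Your length-3-to-length-2 reduction (absorb $dw_1$ into a $\theta$-block, or pass to the right-aligned word) is sound and matches the paper's Case~2, but the core of the lemma remains unproved in your proposal.
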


\begin{proof}
We assume $|u|\ge 5$; otherwise, $u$ is a factor of $\T$ and there is nothing to prove. Consider two cases.\\
\textit{Case 1}: $u=c\theta(v)=cv_1\ldots v_n$, $v_i \in \{ab,ba\}$ for $i\in[1..n]$. W.l.o.g., $v_n=ab$.\\
\textit{Claim 1.} At least one of the words $ua$, $ubab$ is cube free.\\
Assume that $ua$ is not cube free and thus has a suffix $w^3$ of period $p$. Then $u$ has the suffix $u'$ of length $3p-1$ and period $p$. If $u'\ne u$, by the  cube-freeness of $u$ one has $u[|u|{-}3p{+}1]\ne u[|u|{-}2p{+}1]$, $u[|u|{-}3p{+}2]= u[|u|{-}2p{+}2]$. Note that $u[|u|{-}2p{+}1..|u|{-}2p{+}2]=v_{n-p+1}\in\{ab,ba\}$. Then $u[|u|{-}3p{+}1]=u[|u|{-}3p{+}2]$, so these two letters do not form a block $v_i$; then $p$ must be odd. Consider the suffix $xx=v_{n-p+1}\cdots v_n$ of length $2p$ of $u'$. Both prefix and suffix of $x$ of length $p-1$ are concatenations of blocks $ab,ba$. Hence $x$ consists of alternating letters. Then $x[1]=x[p]$. On the other hand, $v_{n-\lfloor p/2\rfloor}=x[p]x[1]$. This contradiction proves that $u'=u$ and then $|u|=3p-1$. Further, since $ua$ and $ubb$ end with cubes, the only length-2 right context of $u$ is $ba$, so $uba=c\theta(vb)$ is cube-free. If $ubab$ ends with a cube of period $p'$, we repeat the above argument for $uba$ to obtain $|uba|=3p'-1$. Hence $3p-1=3p'-3$, which is impossible since the periods are integers. So $ubab$ is cube free and Claim 1 holds.\\
\textit{Claim 2.} At least one of the words $uaa$, $ubabb$ is cube free.\\
Assume that $ua$ is cube free. If $uaa$ has a suffix $w^3$ of period $p$, then $w$ ends with $aa$ (recall that $u$ ends with $ab$). On the other hand, the leftmost $w$ in the suffix $w^3$ of $uaa$ ends with $v_{n-p+1}\in\{ab,ba\}$. Thus $uaa$ cannot have a cube as a suffix and hence is cube free. The same argument works for $ubabb$ if $ubab$ is cube free. The reference to Claim 1 concludes the proof.

\smallskip
Assuming that $uaa$ is cube free, we show that the word 
$$
\v=u\T[6..\infty]= cv_1\cdots v_{n-1}ab\,aabbaababba\cdots
$$
is cube free and then $u$ is T-extendable. Depending on $v_{n-1}$, $\v$ has the non-uniform factor $aabaa$ or $aababaa$, and this factor, denoted by $x$, has a unique occurrence in $\v$ because $u$ and $\T$ are uniform. Note that $ab\T[6..\infty]=\T[4..\infty]$, and both words $a\T[4..\infty]$ and $aab\T[4..\infty]$ are cube free. So if a cube $w^3$ is a factor of $\v$, then Claim~2 implies that $x$ is a factor of $w^3$. If $x$ is a factor of $w^2$, then $x$ occurs in $w^3$ at least twice, which is not the case. So $x=w'ww''$, where $w'$ and $w''$ are nonempty suffix and nonempty prefix of $w$ respectively. Then $w=aba$ if $x=aabaa$, and $w=ababa$ if $x=aababaa$. In both cases a direct check shows that $w^3$ is not a factor of $\v$. So we proved that $\v$ is cube free. Assuming that $ubabb$ is cube free, we use the same argument for another suffix of $\T$:
$$
\v=u\T[20..\infty]= cv_1\cdots v_{n-1}ab\,babbaabbaba\cdots
$$
(here $\v$ contains a unique occurrence of $bbabb$). Note that if $v_n=ba$, then we can take $\T[20..\infty]$ (resp., $\T[6..\infty]$) as the extension of $u$ if $ubb$ (resp., $uabaa$) is cube free.

\smallskip\noindent\textit{Case 2}. $u=c\theta(v)a=cv_1\ldots v_na, v_i \in \{ab,ba\}$  for $i\in[1..n]$.\\
\textit{Case 2.1}: $v_n=ab$. The word $u\T[7..\infty]=cv_1\cdots v_n\T[6..\infty]$ is cube free as in Case~1.\\
\textit{Case 2.2}: $v_n=ba$. Since $ua$ ends with $a^3$, $ub$ is cube free, right aligned and has a right context of length 2. Then $ub$ is T-extendable by Case~1, and so is $u$.  
\end{proof}

\begin{lemma} \label{l:uniform2}
If a cube-free word $u$ has a uniform right context $w$ such that $|w|\ge 2|u|+3$ and $w$ has no prefix $ababa$ or $babab$, then $u$ is T-extendable. 
\end{lemma}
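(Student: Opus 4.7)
The plan is to reduce the lemma to Lemma~\ref{l:uniform} via a case analysis on whether $u$ is uniform. If $u$ is uniform, then $u$ itself is a uniform cube-free word and $w$ provides a right context of length at least $3$ (since $|w| \geq 2|u|+3 \geq 3$), so Lemma~\ref{l:uniform} applies directly and gives T-extendability of $u$.

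If $u$ is non-uniform, let $\mu$ be the last marker of $u$ and write $u = u_0 \mu u_1$ with $u_1$ uniform. The goal is to find a prefix $w_1$ of $w$ such that $u_1 w_1$ is a uniform cube-free word and the remaining $w_2 = w[|w_1|+1..|w|]$ has length at least $3$; Lemma~\ref{l:uniform} would then yield T-extendability of $u_1 w_1$. The obstacle is that a marker can straddle the $u_1$-$w$ boundary: for instance, $aabaa$ arises when $u_1$ ends with $aab$ and $w$ begins with $aa$, since $w$ uniform forces $w = aab\cdots$ in this situation. When such a boundary marker $\mu'$ appears, I redefine $\mu := \mu'$ and take $u_1$ to be the new uniform tail (a prefix of $w$ starting just after $\mu'$), and repeat. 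The hypothesis that $w$ has no prefix $ababa$ or $babab$ eliminates the most delicate boundary configurations, and the uniformity of $w$ constrains the possible straddling markers to a short enumerable list. Each iteration consumes at least a constant number of letters of $w$, so the length slack $|w| \geq 2|u|+3$ guarantees termination with a final uniform word $u_1^* w_1^*$ and a residual right context $w_2^*$ of length at least $3$.

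Lemma~\ref{l:uniform} applied to $u_1^* w_1^*$ then provides a finite word $w''$ and an integer $n \geq 1$ such that $u_1^* w_1^* w'' \T[n..\infty]$ is cube free. To lift this to $u$, I would take the right context $w_1^* w'' \T[n..\infty]$. Cubes in $u \, w_1^* w'' \T[n..\infty]$ split into three categories: those lying inside $u w_1^*$ (ruled out because $u w_1^*$ is a prefix of the cube-free word $uw$), those lying inside $u_1^* w_1^* w'' \T[n..\infty]$ (ruled out by Lemma~\ref{l:uniform}), and those straddling the peeled-off non-uniform prefix with the Thue-Morse tail. A straddling cube of the third type must have period at least $|u_1^* w_1^* w''|/3$, and combining the overlap-freeness of $\T$ with the Fine--Wilf lemma forces a contradiction with the cube-freeness of $uw$.

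The principal obstacle is the boundary case analysis: systematically enumerating the interactions of the last letters of $u_1$ with the first letters of $w$ that can spawn one of the markers $aabaa,\ aababaa,\ bbabb,\ bbababb$, and checking that each iterative step makes strict progress inside the length budget. A secondary technical step is the Fine--Wilf argument ruling out long-period straddling cubes when lifting T-extendability from $u_1^* w_1^*$ back to $u$.
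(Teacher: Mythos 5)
Your skeleton (peel the word at its last marker, obtain a uniform tail, apply Lemma~\ref{l:uniform} to it, then lift the Thue--Morse extension back to $u$) is the same as the paper's, but your third step --- the lifting --- has a genuine gap. You treat Lemma~\ref{l:uniform} as a black box (``some suffix of $\T$ making the uniform tail cube-free'') and claim that straddling cubes are killed by a period lower bound of $|u_1^*w_1^*w''|/3$ plus overlap-freeness of $\T$ plus Fine--Wilf. This does not close. First, uniform cube-free words are not overlap-free (e.g.\ $babab=b\,\theta(aa)$ is uniform), so overlap-freeness only bounds the portion of the cube lying in $\T[n..\infty]$ by $2p$; the resulting inequality $3p\le |u|+|w_1^*w''|+2p$ leaves a non-empty window for $p$ and yields no contradiction. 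Second, and more decisively, small-period straddling cubes genuinely can occur for a ``bad'' choice of the Thue--Morse suffix: the paper exhibits the live configurations with root lengths $|x|=5,7,9$ straddling the last marker, and rules them out only because the suffix $\v$ is chosen with care \emph{inside} the proof of Lemma~\ref{l:uniform} --- taking $\hat w\v$ to be a suffix of $\T$ whenever the uniform tail $\hat w$ is a Thue--Morse factor, and otherwise the specific Case-1 extension, which controls the letters immediately after the seam marker. The contradiction then comes from a synchronization argument on occurrences of the rightmost marker $z$ inside $x^3$ (since the uniform tail contains no marker, a second occurrence of $z$ at distance $|x|$ is forced to the seam between $\hat w$ and $\v$, giving $|u|\le 6$, $|x|\le 9$ and a finite check), not from Fine--Wilf. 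Any proof that only uses ``there exists a cube-free Thue--Morse extension of the uniform tail'' is bound to fail on exactly these patterns.

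A secondary, less serious issue: your iterative peeling of boundary markers is unnecessary and obscures a length invariant you need. Under the hypotheses ($w$ uniform, no prefix $ababa/babab$), every marker of $uw[1..m]$ \emph{begins} in $u$: $aabaa/bbabb$ cannot lie in the uniform $w$, and an internal occurrence of $ababa/babab$ in a cube-free word sits inside the non-uniform factor $aababaa/bbababb$, again impossible in $w$. So one renaming at the rightmost marker of $u\hat w$ (where $\hat w$ is the right-aligned prefix of $w$ of length $2|u|$ or $2|u|{+}1$) suffices, and it preserves the inequality $|\hat w|\ge 2|u|$ after renaming. That invariant --- the uniform tail is at least twice the remaining prefix --- is what the paper uses to eliminate the single-occurrence case of $z$ and to derive $|x|\ge|\hat w|-3$ versus $|x|\le|u|+3$; your sketch never tracks the relative lengths of the peeled prefix and the uniform tail, so even the marker-based case analysis could not be completed from your set-up as written.
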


\begin{proof}
Let $\hat w$ be the right aligned prefix of $w$ of length $2|u|$ or $2|u|{+}1$. We will prove that $u\hat w$ is $T$-extendable, which implies the result immediately. Suppose $u\hat w$ is non-uniform (otherwise, it is T-extendable by Lemma~\ref{l:uniform}). Then it contains markers, and all of them begin in $u$ by the conditions on $w$. Let $z$ be the rightmost marker in $u\hat w$. W.l.o.g. the first letter of $z$ is $a$ and we can write $u=u'au''$, where $au''w$ begins with this distinguished occurrence of $z$. The pair $(u'a,u''w)$ satisfies all conditions of the lemma, so for the rest of the proof we rename $u'a$ as $u$ and $u''\hat w$ as $\hat w$. This renaming retains the value of the word $u\hat w$ we analyze; still, $\hat w$ is right aligned and $|\hat w|\ge 2|u|$. Since $\hat w$ has a right context of length at least 2, it is T-extendable by Lemma~\ref{l:uniform}, and, moreover, there is a suffix $\v$ of $\T$ such that $\hat w\v$ is cube-free. If $\hat w$ is a factor of $\T$, we choose $\v$ such that $\hat w\v$ is a suffix of $\T$; otherwise, $\v$ is chosen as in the proof of Lemma~\ref{l:uniform}, Case~1.

Assume to the contrary that $u\hat w\v$ contains a cube $x^3$; it starts in $u$ and ends in $\v$, thus containing the distinguished occurrence of $z$:\\
\centerline{\includegraphics[scale=0.9, trim = 25mm 235mm 78mm 43mm, clip]{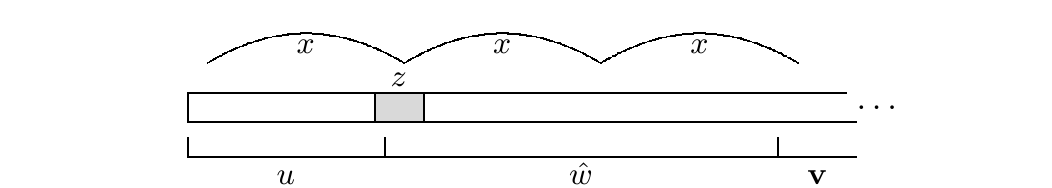}}
If $z$ occurs in $x^3$ only once, we have one of two cases, up to symmetry:
$$
\underbrace{\cdots ab\pmb{a}}_u \underbrace{\pmb{aba\,a}ba\cdots}_{\hat w\v}\qquad\text{or}\qquad
\underbrace{\cdots \pmb{a}}_u \underbrace{\pmb{b\,ab\,a}b\cdots}_{\hat w\v}
$$
In the first case the condition $|\hat w|\ge 2|u|$ is violated; the second case contradicts the choice of $\v$ (here $\hat w$ is a factor of $\T$). Therefore, $z$ must have two occurrences in $x^3$ at distance $|x|$. If $z$ occurs in $x^3$ to the left of the distinguished occurrence, then $|x|<|u|$ and $x^3$ cannot end in $\v$. Otherwise, $x^3$ contains exactly two occurrences of $z$: the distinguished one and another one on the border of $\hat w$ and $\v$. Then $x$ does not contain $z$, implying $|x|\le |u|+3$. On the other hand, $|x|\ge|\hat w|-3$ as the distance between the occurrences of $z$. Hence $|u|\le 6$ and $|x|\le 9$. This leaves, up to symmetry, the following options for $x^3$:
$$
\begin{array}{llll}
|x|=5:&aba\pmb{a\,babaabab\,a}ab\cdots &|x|=9:&aabba\pmb{a\,baba}abba\pmb{abab\,a}abbaabab\cdots\\
|x|=7:&aabb\pmb{a\,abaa}bb\pmb{aaba\,a}bbaab\cdots%\hspace*{2mm}
&&ababb\pmb{a\,abaa}babb\pmb{aaba\,a}babbaaba\cdots\end{array}
$$
The factor between the marginal letters of markers contains $\hat w$; in first three cases, this factor occurs in $\T$ but $\hat w\v$ is not a suffix of $\T$, contradicting the choice of $\v$. In the last case $\hat w$ is not a factor of $\T$, so $\v$ is chosen as in the proof of Lemma~\ref{l:uniform}; hence the marker on the border of $\hat w$ and $\v$ must be followed by $bb$, not $ba$. This contradiction finishes the proof. 
\end{proof}

Some right-extendable words have no long uniform right contexts, as Fig.~\ref{f:tree} shows. However, a weaker property is enough for our purposes.
\begin{figure}[!htb]
%\vspace*{-6mm}
\centerline{\includegraphics[scale=1, trim = 10mm 259mm 75mm 19mm, clip]{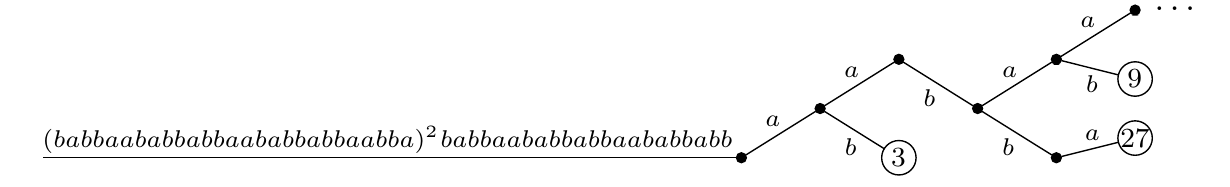}}
\caption{A right-extendable word of length 76 having no long uniform right contexts:  all its infinite right contexts begin with the marker $aabaa$.}
\label{f:tree}
\vspace*{-3mm}
\end{figure}

\begin{lemma} \label{l:Text}
Every cube-free word having an infinite right context with finitely many markers is T-extendable.
\end{lemma}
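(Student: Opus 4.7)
The approach is to reduce Lemma~\ref{l:Text} to Lemma~\ref{l:uniform2}: once we pass the last marker in the given infinite right context, the remaining tail is uniform, which matches exactly the hypothesis of Lemma~\ref{l:uniform2}.

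In more detail, I would fix an infinite right context $\mathbf{w}$ of $u$ that contains only finitely many markers, and pick a position $N$ of $\mathbf{w}$ past which no marker occurs, so that $s = \mathbf{w}[N..\infty]$ is a marker-free infinite cube-free word. The first key claim is that $s$ is uniform. Marker-freeness directly excludes the four markers $aabaa$, $ababa$, $babab$, $bbabb$, and therefore also excludes $aababaa$ and $bbababb$ (which contain $ababa$ and $babab$ as factors, respectively). Observation~2 then gives that $s$ is uniform.

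Next I would set $u' = u\cdot \mathbf{w}[1..N{-}1]$, a cube-free extension of $u$. Any prefix $w$ of $s$ with $|w|\ge 2|u'|+3$ is a uniform right context of $u'$ and does not begin with $ababa$ or $babab$ (again because $s$ is marker-free). Lemma~\ref{l:uniform2} therefore yields that $u'$ is T-extendable: there exist a word $w''$ and an integer $n\ge 1$ such that $u'\,w''\,\T[n..\infty]$ is cube free. Since $u$ is a prefix of $u'$, the word $\mathbf{w}[1..N{-}1]\,w''\,\T[n..\infty]$ is a right context for $u$ of the required form, proving that $u$ itself is T-extendable.

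The only real obstacle I anticipate is the uniformity step, because ``marker-free'' a priori only forbids the four explicit marker patterns, whereas Observation~2 also requires avoidance of $aababaa$ and $bbababb$. The fact that the two latter factors each contain a marker is what closes this gap; once this is secured, everything else is a direct application of Lemma~\ref{l:uniform2} together with the trivial observation that T-extendability passes from $u'$ back to any prefix of $u'$, in particular to $u$.
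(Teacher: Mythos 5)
Your proof is correct and follows essentially the same route as the paper: split the context at the last marker, observe the remaining tail is uniform, apply Lemma~\ref{l:uniform2} to the extended word, and pass T-extendability back to the prefix $u$. You additionally spell out two details the paper leaves implicit --- that marker-freeness forces uniformity via Observation~2 (including the factors $aababaa$, $bbababb$, which contain markers) and that it rules out the forbidden prefixes $ababa/babab$ required by Lemma~\ref{l:uniform2} --- which is a sound filling-in, not a different argument.
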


\begin{proof}
Let $u$ be the word and $\w$ be its context from the conditions of the lemma. The finiteness of the number of markers allows us to write $\w=w_1\v$, where $\v$ is uniform. Then $uw_1$ has an infinite uniform right context, and hence is T-extendable by Lemma~\ref{l:uniform2}. Then $u$ is T-extendable as well. 
\end{proof}

Thus if Theorem~\ref{t:markers} holds, then Lemmas~\ref{l:Text} and~\ref{l:tmred} imply Theorem~\ref{t:uwv}.

\subsection{Proof of Theorem~\ref{t:markers}} \label{ss:T2}

We prove Theorem~\ref{t:markers} by reductio ad absurdum; to obtain a contradiction, we use the following lemma on cube-free words over an arbitrary alphabet. (For Theorem~\ref{t:markers}, bounding $k$ in Lemma~\ref{l:log} by \textit{any} function of $n$ would be sufficient; however, better bounds can be useful for the algorithmic applications.)

\begin{lemma} \label{l:log}
Let $u$ be a cube-free word of length $n$ over an \emph{arbitrary} fixed alphabet and let $u$ have a length-$k$ right context $w$ with the following property: for each $i=1,\ldots,k$, there exists an integer $p_i\ge 2$ such that the suffix of length $3p_i-2$ of the word $u{\cdot}w[1..i]$ has period $p_i$ and, moreover, $p_i\ne p_{i+1}$. Then $k=O(\log n)$; more precisely, $k\le \max\{1, 8.13\log n-15.64\}$. % for (5/2)+ the lemma should be proved with the suffixes of length 2p_i instead of 3p_i-2
\end{lemma}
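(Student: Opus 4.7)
I will extract the $O(\log n)$ bound on $k$ in three stages. Put $T_i = u \cdot w[1..i]$, so each $T_i$ is cube-free with $|T_i| = n+i$, and let $S_i$ denote the suffix of $T_i$ of length $3p_i - 2$ with period $p_i$; cube-freeness of $T_i$ already forces $p_i \le (n+k)/3$.

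\textbf{Step 1 (basic dichotomy via Fine--Wilf).} I will first show that $p_{i+1} \ge 2p_i$ or $p_{i+1} \le (p_i+1)/2$. If $p_{i+1} < p_i$, the word obtained from $S_{i+1}$ by dropping its last letter $w[i+1]$ is a suffix of $T_i$ of length $3p_{i+1} - 3$ sitting inside $S_i$, so it carries both periods $p_{i+1}$ and $p_i$. Applying Lemma~\ref{fw}: either the length condition holds and the word has period $\gcd(p_i, p_{i+1})$, in which case if this gcd is less than $p_{i+1}$ it is at most $p_i/2$ and produces a cube inside $T_i$, contradicting cube-freeness, so $\gcd(p_i, p_{i+1}) = p_{i+1}$, i.e., $p_{i+1} \mid p_i$ and $p_{i+1} \le p_i/2$; or the length condition fails and a direct inequality gives $p_{i+1} \le (p_i+1)/2$. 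The case $p_{i+1} > p_i$ is symmetric, with $S_i$ now contained inside the prefix of $S_{i+1}$ obtained by dropping its last letter.

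\textbf{Step 2 (strict separation).} I will then rule out the borderline transitions $p_{i+1} = 2p_i$ and $p_{i+1} = p_i/2$. Suppose $p_{i+1} = 2p_i$ and let $R,\rho$ be the roots of periods $p_i, p_{i+1}$ in $S_i, S_{i+1}$, so that $S_i = RR\cdot R[1..p_i-2]$ and $S_{i+1} = \rho\rho\cdot \rho[1..2p_i-2]$. Since $S_i$ sits at positions $3p_i$ through $6p_i - 3$ inside $S_{i+1}$, matching letter-by-letter forces $\rho = R'R'$ with $R' = R[2..p_i]R[1]$, a cyclic rotation of $R$; hence $\rho$ (and so $S_{i+1}$) has period $p_i$, and since $|S_{i+1}| = 6p_i - 2 \ge 3p_i$, the prefix $(R')^3$ of $S_{i+1}$ is a cube in $T_{i+1}$, contradiction. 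The reciprocal case is analogous, forcing $R$ itself to have period $p_i/2$ and producing a cube in $S_i$. The dichotomy thus sharpens to $p_{i+1} > 2p_i$ or $p_{i+1} < p_i/2$.

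\textbf{Step 3 (global logarithmic bound, and the main obstacle).} The strict dichotomy yields $|\log_2 p_{i+1} - \log_2 p_i| > 1$ at every step, but this alone allows $\log_2 p_i$ to oscillate indefinitely in a bounded interval and therefore does not bound $k$. To close the gap I will argue that the long periodic suffix $S_{i+1}$ fixed at step $i+1$ determines a stretch of length $3p_{i+1} - 2$ at the end of $T_{i+1}$ through its root, and that solving the consistency conditions between that stretch and a hypothetical small-period suffix $S_{i+2}$ rules out all values of $p_{i+2}$ in a neighborhood of $p_i$ --- so the sequence cannot simply return to previously visited periods. Propagating this across a few consecutive steps yields a Fibonacci-type recurrence $p_{i+c} \ge \alpha p_i$ for an explicit block length $c \ge 2$ and growth rate $\alpha > 1$; combined with $p_i \le (n+k)/3$, this gives $k = O(\log n)$, and optimizing $c, \alpha$ over extremal admissible sequences produces the constant $8.13$. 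Bridging the local Fine--Wilf analysis of Steps 1--2 to this global propagation is where I expect the bulk of the technical work to lie.
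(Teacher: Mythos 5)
Your Steps 1--2 reproduce, for consecutive indices, essentially the paper's Fine--Wilf dichotomy, but Step 2's ``strict separation'' is false as stated: from $p_{i+1}\le(p_i+1)/2$ you cannot conclude $p_{i+1}<p_i/2$ by merely excluding the single value $p_{i+1}=p_i/2$, since $p_{i+1}=(p_i+1)/2$ (for odd $p_i$) remains available --- and this is not a borderline nuisance but exactly the extremal case: the paper's densest admissible sequence begins $q_0=2p-1$, $q_1=p=(q_0+1)/2$ at consecutive positions. More seriously, Step 3, which you yourself flag as the bulk of the work, rests on a mechanism that is simply not true. There is no Fibonacci-type growth $p_{i+c}\ge\alpha p_i$ for any constants $c,\alpha>1$, and the periods \emph{can} return to previously visited values: in the paper's extremal packing the period sequence oscillates inside a single dyadic range $[p..2p]$ forever (e.g.\ $p,p+1,p,p+1,\dots$), with the same period recurring at distance about $2p+1$. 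So any argument that tries to force monotone or geometric escalation of $p_i$ is doomed; the local two-step exclusion you sketch (ruling out $p_{i+2}$ near $p_i$) only delays returns, it does not prevent them.

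What actually closes the gap in the paper is a global counting argument you do not have. One derives the constraint for \emph{all} pairs $i<j$ with gap $l=j-i$, not just consecutive ones: writing $p=p_i$, $q=p_j$, the intersection of the two periodic suffixes forces $q>2p$ or $q\le(p+l)/2$ (with the exceptional slack $l\ge 2q-1$ when $q=p$ and $l\ge q-1$ when $q=p/2$). These distance constraints mean that indices carrying a period in a fixed range $[p..2p]$ must be spaced roughly $p$ apart, so at most $\frac{2(k-2)}{2p+1}+2$ of the $k$ indices can lie in that range. Summing these bounds over the $r\approx\log\frac{n+k+2}{9}+1$ dyadic ranges covering all possible periods gives
\[
k\le 2(k-2)\sum_{i=1}^{r}\frac{1}{3\cdot 2^{i}-1}+2r\le 0.754\,(k-2)+2r,
\]
and since the density sum is strictly below $1/2$ per position counted twice, the leftover fraction $0.246\,(k-2)\le 2r$ yields $k=O(\log n)$ and, after arithmetic, the stated constant $8.13$. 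In short: the bound comes from a density/packing estimate over dyadic period ranges using pairwise gap constraints, not from growth of the period sequence, and without that idea your outline does not produce any bound on $k$.
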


\begin{proof}
In the proof we can assume $k\ge 2$. Let $1\le i<j \le k$, $p=p_i$, $q=p_j$, $l=j-i$, and let $v$ be the intersection of the periodic suffixes of $u{\cdot}w[1..i]$ and $u{\cdot}w[1..j]$ (see Fig.~\ref{f:suffixes}a,b). If $|v|\ge p+q-\gcd(p,q)$, then $v$ has the period $\gcd(p,q)$ by the Fine--Wilf property (Lemma~\ref{fw}). If $p\ne q$, this means that the root of the longer periodic suffix is an integer power of a shorter word; thus $uw$ contains a cube, which is impossible. If $p=q$, then we are in the situation shown in Fig.~\ref{f:suffixes}b, and the union of two suffixes has period $p$ and the length $3p-2+l$. In this case, $l\ge 2$ by conditions of the lemma, so we again obtain a cube. Thus we conclude that 
\begin{equation} \label{e:fw}
|v|\le p+q-\gcd(p,q)-1.
\end{equation}

\begin{figure}[!htb]
%\vspace*{-4mm}
\centerline{\includegraphics[scale =0.83, trim = 38mm 234mm 10mm 40mm, clip]{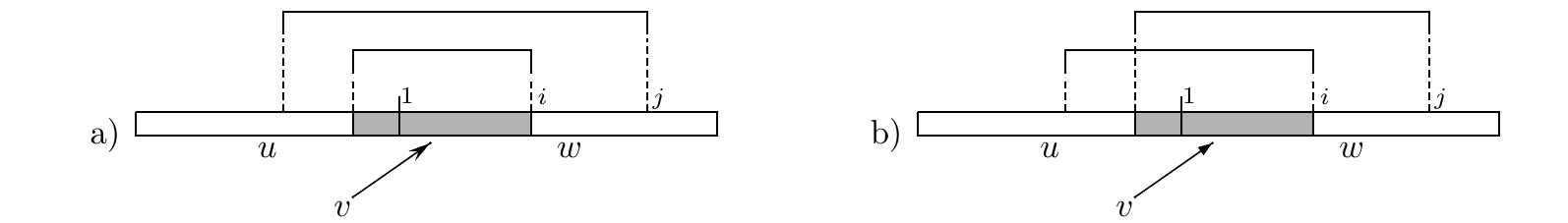}}
\caption{The mutual location of periodic factors in the word $uw$ (Lemma~\ref{l:log}).}
\label{f:suffixes}
%\vspace*{-3mm}
\end{figure}
The case in Fig.~\ref{f:suffixes}b  corresponds to $|v|=3q-2-l$. Comparing this condition to \eqref{e:fw}, we get $q\le \frac{p+l}2$ (but $q=p$ only if $l\ge 2p-1$ and $q=p/2$ only if $l\ge p/2-1$). Similarly, the case in Fig.~\ref{f:suffixes}a corresponds to $|v|=3p-2$ and we get $q>2p$ from \eqref{e:fw}. Thus, all possible values of $q$ are outside the red area in Fig.~\ref{f:pql}.

\begin{figure}[!htb]
%\vspace*{-3mm}
\centerline{\includegraphics[scale =0.95, trim = 5mm 227mm 95mm 10.5mm, clip]{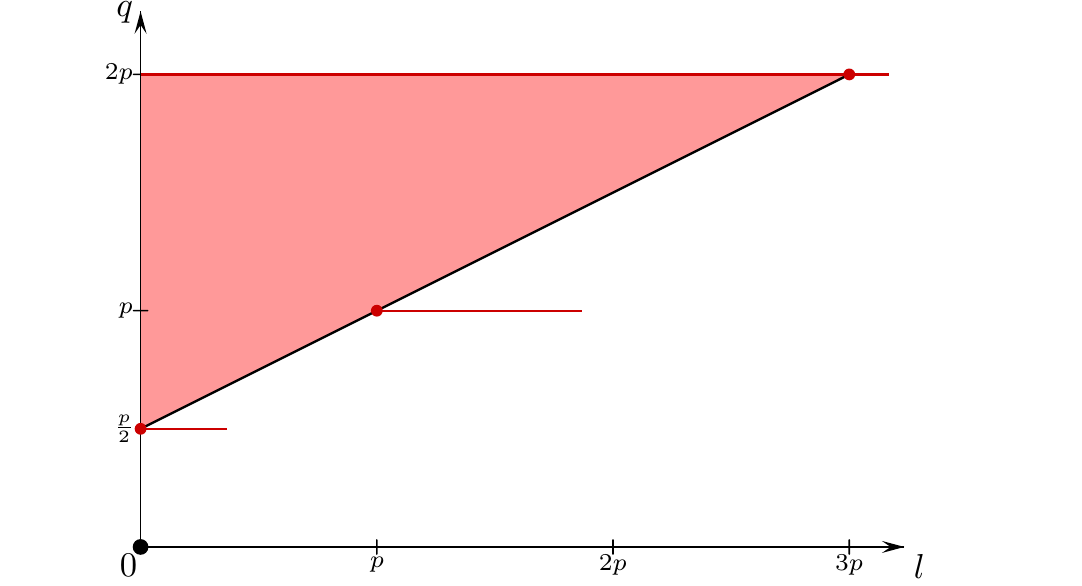}}
\caption{The restrictions on periods of periodic factors in the word $uw$ (Lemma~\ref{l:log}).}
\label{f:pql}
% \vspace*{-3mm}
\end{figure}
Now we estimate how many elements of the sequence $\{p_1,\ldots,p_k\}$ can belong to the range $[p..2p]$ for some fixed $p\ge 2$. This is an analog of \cite[Lemmas 4,5]{PeSh15} and \cite[Lemma 9]{PeSh17}. Let $i_0<i_1<\cdots <i_s$ be the list of all positions such that the periodic suffix of $u{\cdot}w[1..i_j]$ has the period from the range $[p..2p]$; let $q_0,\ldots,q_s$ denote these periods. Then Fig.~\ref{f:pql} gives us the lower bound for the distance $l_j=i_{j+1}-i_j$ between consecutive positions from the list:
\begin{equation} \label{e:dist}
l_j\ge 2q_{j+1}-q_j;\ l_j\ge 2q_{j+1}-1 \text{ if }q_{j+1}=q_j;\ l_j\ge q_{j+1}-1 \text{ if }q_{j+1}=q_j/2. 
\end{equation}
The densest packing of the numbers $i_j$, satisfying the restrictions \eqref{e:dist}, is achieved for $q_0=2p-1$, $q_1=q_3=q_5=\cdots=p$, $q_2=q_4=q_6=\cdots=p+1$: one can take $i_0=1$, $i_1=2$, and $i_{2j}=i_{2j-1}+p+2$, $i_{2j+1}=i_{2j}+p-1$ for all subsequent positions. Since $i_s\le k$, we have $\big\lceil \frac {s-1}2\big\rceil\cdot(p+2)+\big\lfloor \frac {s-1}2\big\rfloor\cdot(p-1)+2\le k$, implying the upper bound for the number of periods from the range $[p..2p]$:
\begin{equation} \label{e:range}
s{+}1\le \frac {2(k-2)}{2p+1}+2.
\end{equation}
Since $3p-2\le |uw| = n+k$, the maximum possible value of $p$ is $\big\lfloor \frac {n+k+2}3\big\rfloor$. We partition all possible periods into $r$ ranges of the form $[p..2p]$:
$$
[2..4], [5..10], [11..22],\ldots,\big[3\cdot 2^{r-1}{-}1..\big\lfloor \tfrac {n+k+2}3\big\rfloor\big].
$$
The number of ranges thus satisfies $r\le \log \frac {n+k+2}9 +1$. The sum of the upper bounds \eqref{e:range} for all ranges is at least $k$; observing that the number $2p{+}1$ in \eqref{e:range} is the first period from the range next to $[p..2p]$, we can write
\begin{equation} \label{e:k1}
k\le 2(k-2){\cdot}\sum_{i=1}^r \frac 1{3\cdot 2^i{-}1} + 2r.
\end{equation}
The sum in \eqref{e:k1} is bounded by $\frac 15+ \frac 1{11}+\frac 1{23} + \frac 1{47}\cdot\sum_{i=0}^\infty \frac 1{2^i} < 0.377;$
substituting this value and the upper bound for $r$, we get
\begin{equation} \label{e:k2}
0.246(k-2)\le 2 \log \tfrac {n+k+2}9.
\end{equation}
For $k\ge n-1$, \eqref{e:k2} implies $0.246(k-2)\le 2 \log \tfrac {2k+3}9$, but this inequality fails for $k\ge 2$. So $k\le n-2$ and we replace \eqref{e:k2} with the inequality $0.246(k-2)\le 2 \log \tfrac {2n}9$; it gives, after arithmetic transformations, the required bound on $k$. 
\end{proof}

\begin{proof}[Proof of Theorem~\ref{t:markers}]
For the sake of contradiction, assume that all infinite right contexts of some right-extendable cube-free word $u$ contain infinitely many markers. W.l.o.g. we can assume that $u$ ends with a marker (if not, choose a prefix $v$ of an infinite right context of $u$ such that $uv$ ends with a marker, and replace $u$ with the word $uv$ having the same property of right contexts). Let $z$ be the marker which is a suffix of $u$. For example, if $u$ is the word written in the ``trunk'' of the tree in Fig.~\ref{f:tree}, then $z=bbabb$.

By our assumption, $u$ has no infinite uniform right contexts. Thus $u$ has finitely many uniform right contexts (in Fig.~\ref{f:tree} such contexts are $\lambda$, $a$, $aa$, $aab$, $aaba$, and $aabb$). Other uniform words, being appended to $u$, produce cubes; in Fig.~\ref{f:tree}, appending a word beginning with $ab$ (resp., $aabab$, $aabba$) gives the cube $(bab)^3$ (resp., $(babbaabab)^3$, $(babbaababbabbaababbabbaabba)^3$). All these cubes contain markers and illustrate three types of cubes with respect to the occurrences of markers ($x$ below denotes the root of the cube):
\begin{itemize}
\item \textit{mini}: $x^2$ contains no markers, $x^3$ contains a marker (example: $x=bab$);
\item \textit{midi}: $x$ contains no markers, $x^2$ contains a marker (example: $x=babbaabab$);
\item \textit{maxi}: $x$ contains markers (example: $x=babbaababbabbaababbabbaabba$);
\end{itemize}
Note that mini cubes are exactly those having the root $x\in\{ab,ba,aba,bab\}$. Further, the intersection of two markers in a cube-free word is either empty or one-letter; this fact implies that each midi cube contains exactly two markers.

We call $w$ a \textit{semi-context} of $u$ if $uw$ ends with a cube but $u{\cdot} w[1..|w|{-}1]$ is cube free. Next we show the following fact: 
\begin{itemize}
\item[$(*)$] The word $u$ has two distinct uniform semi-contexts $w_1$ and $w_2$ such that $uw_1$ and $uw_2$ end with midi or maxi cubes.
\end{itemize}
To prove $(*)$ we need a case analysis. W.l.o.g., $z$ begins with $a$. Let $\w$ be an infinite right context of $u$.% and let $z'$ be the next marker after $z$ in $u\w$.

\smallskip\noindent
Case 1: $z=aabaa$. We have $u=\cdots baabaa$, so $\w$ cannot begin with $a$ or $baa$ because $u\w$ is cube-free. So $\w$ begins with $bba$, $babba$, or $baba$. In the first case, some prefixes of words
\begin{align} \label{e:t1}
u\T[2..\infty]&=\cdots baabaa\,bba\,baababba\cdots,\\
u\T[22..\infty]&=\cdots baabaa\,bba\,abbabaab\cdots
\end{align}
must end with cubes. The longest common prefix $ubba$ of these words is cube free as a prefix of $u\w$, so these cubes are different, contain the marker $z$, and are not mini. Hence we can take some prefixes of $\T[2..\infty]$ and $\T[22..\infty]$ as the semi-contexts required in $(*)$. If $\w$ begins with $babba$, the same result is obtained with prefixes of the words
\begin{align} \label{e:t2}
u\T[12..\infty]&=\cdots baabaa\,babba\,baabbaab\cdots,\\
u\T[20..\infty]&=\cdots baabaa\,babba\,abbabaab\cdots
\end{align}
Finally, if $\w$ begins with $baba$, we can take one word from each pair (say, $u\T[2..\infty]$ and $u\T[12..\infty]$). Their longest common prefix is the cube-free word $ub$, so the cubes given by the corresponding semi-contexts are distinct.

\smallskip\noindent
Case 2: $z=ababa$. Here $\w=ab\cdots$.  Taking the pair of words
\begin{align} \label{e:t3}
u\T[7..\infty]&=\cdots ababa\,ab\,baababba\cdots,\\
u\T[19..\infty]&=\cdots ababa\,ab\,abbaabba\cdots,
\end{align}
we achieve the same result as in Case 1: some prefixes of these words end with midi or maxi cubes, and these cubes are distinct because the common prefix $uab$ of the presented words is cube-free. Thus, $(*)$ is proved.

\medskip
Now take the semi-contexts $w_1$, $w_2$ given by $(*)$ such that $uw_1$ and $uw_2$ end with cubes $x_1^3$ of period $p_1$ and $x_2^3$ of period $p_2$ respectively. Let $w$ be the longest common prefix of $w_1$ and $w_2$; w.l.o.g., $w_1=waw_1'$, $w_2=wbw_2'$. In both $x_1^3$ and $x_2^3$, the suffix $z$ of $u$ is the rightmost marker and hence matches an earlier occurrence of the same marker in $u$. These occurrences are different, because $z$ is followed by $wa$ in $x_1^3$ and by $wb$ in $x_2^3$. In particular, $p_1\ne p_2$. W.l.o.g., $p_1>p_2$; then $x_1$ contains $z$ and so $x_1^3$ is maxi (see Fig.~\ref{f:w1w2}).
\begin{figure}[!htb]
\vspace*{-3mm}
\centerline{\includegraphics[scale =1, trim = 10mm 232mm 38mm 41mm, clip]{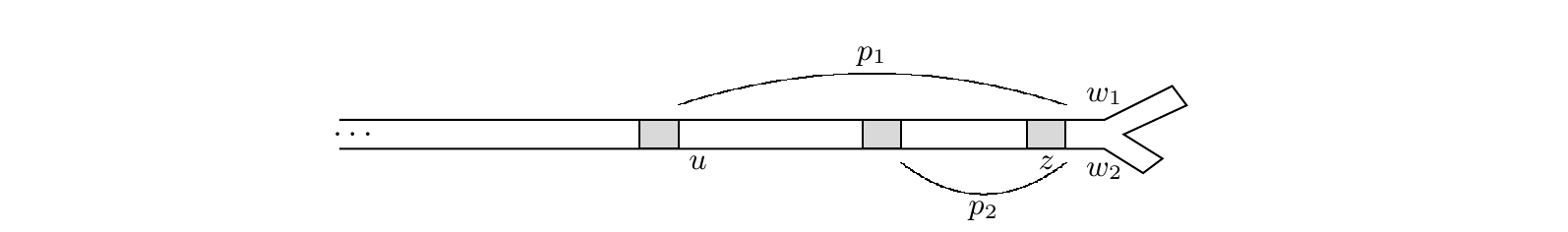}}
\caption{Semi-contexts $w_1$ and $w_2$ of the word $u$: periods of cubes and corresponding markers. Three grey factors are markers equal to $z$, other markers are not shown.}
\label{f:w1w2}
%\vspace*{-3mm}
\end{figure}

Let $z_1=z, z_2,\ldots, z_m$ be all markers in $u$, right to left. We factorize $u$ as $u=y_m\cdots y_2y_1$, where $y_i$ begins with the first letter after $z_{i+1}$ ($y_m$ is a prefix of $u$) and ends with the last letter of $z_i$ (even if $z_{i+1}$ and $z_i$ overlap); see Fig.~\ref{f:factorize} for the example. Assume that $z_{j+1}$ matches $z_1$ in the maxi cube $x_1^3$ (note that $j\ge 2$, because a marker with a smaller number matches $z_1$ in $x_2^3$; $j=3$ in Fig.~\ref{f:factorize}). Then $z_{j-1},\ldots, z_1$ are in the rightmost occurrence of $x_1$, and $z_j$ is either also in this occurrence or on the border between the middle and the rightmost occurrences (in Fig.~\ref{f:factorize}, the latter case is shown). Depending on this, $x_1^3$ contains either $3j$ or $3j{-}1$ markers. Further, we see that $w_1$ is a prefix of $y_j$, $y_1=y_{j+1},\ldots,y_{2j-2}=y_{3j-2}$.
\begin{figure}[!htb]
%\vspace*{-6mm}
\centerline{\includegraphics[scale =1.1, trim = 15mm 235mm 38mm 41mm,  clip]{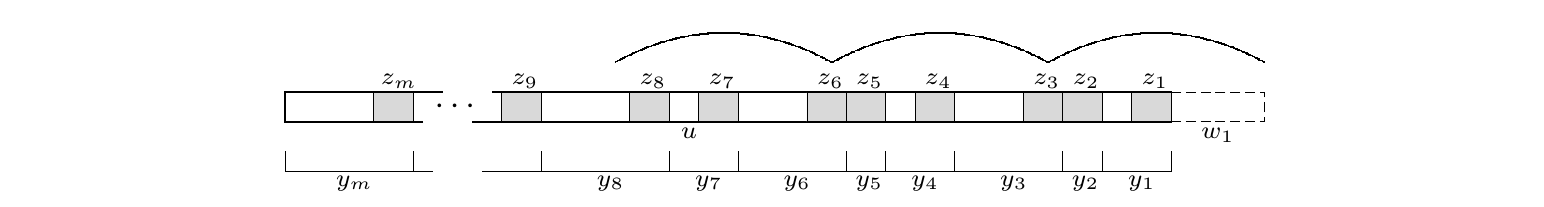}}
\caption{Marker-based factorization of the word $u$. Markers are grey, arcs indicate the cube after appending $w_1$ to $u$.}
\label{f:factorize}
%\vspace*{-3mm}
\end{figure}

Let us extend $u$ to the right by a context $y_0$ such that $uy_0$ is right extendable, $y_0$ ends with a marker $z_0$, and all proper prefixes of $y_0$ are uniform. Applying all the above argument to $uy_0$ and its factorization $y_m\cdots y_1y_0$, we get another maxi cube (say, $x_0^3$) and the corresponding set of equalities between $y_i$'s. Note that $y_0\ne y_j$: as was mentioned in the previous paragraph, $y_j$ has the prefix $w_1$, while $y_0$ cannot have this prefix because $uw_1$ contains a cube.

\medskip
Let us iterate the procedure of appending a context $k$ times, getting a right-extendable word $uy=y_m\cdots y_1y_0\cdots y_{1-k}$ as the result (according to our assumption on $u$, the number $k$ can be arbitrarily big). Now consider the finite alphabet $\Gamma=\{y_m,\ldots, y_1,y_0,\ldots, y_{1-k}\}$ and let $U=y_m\cdots y_2y_1$, $Y=y_0\cdots y_{1-k}$ be words over $\Gamma$. They are cube free and $Y$ is a length-$k$ right context of $U$. Each word $U{\cdot}Y[1..i]$ ends with a suffix having some period $p_i$ and length $3p_i-2$ or $3p_i-1$. In addition, $p_i\ne p_{i+1}$, because $y_{-i}\ne y_{p_i-i}$. So all conditions of Lemma~\ref{l:log} are satisfied, and we apply it to get an upper bound on $k$. The existence of this bound contradicts our assumption that all infinite right contexts of $u$ have infinitely many markers. The theorem is proved.
\end{proof}

\section{Transition Property for Big Alphabets}

Here we extend the results of the previous section to arbitrary finite alphabets.

\begin{theorem} \label{t:uwv3}
For every $d\ge 3$ and every pair $(u,v)$ of $d$-ary cube-free words such that $u$ is right extendable and $v$ is left extendable, there exists a $d$-ary word $w$ such that $uwv$ is cube free.
\end{theorem}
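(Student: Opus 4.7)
The plan is to reduce Theorem~\ref{t:uwv3} to the already-proven binary case (Theorem~\ref{t:uwv}). The high-level strategy is to extend $u$ to the right and $v$ to the left by finite words over $\Sigma_d$ so that both extensions end (respectively, begin) with a long binary cube-free factor that is binary-extendable in the corresponding direction. Theorem~\ref{t:uwv} then supplies a binary transition word between these binary factors, and a careful verification shows that the whole $d$-ary word remains cube-free.

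The key technical step is a ``binary tail'' lemma: every right-extendable $d$-ary cube-free word $u$ admits a finite extension $ur$ such that $ur\mathbf{b}$ is $d$-ary cube-free for some infinite binary cube-free word $\mathbf{b}$. The proof splits into two cases. If $u$ has an infinite cube-free extension $u\mathbf{s}$ that uses only finitely many letters outside $\{a,b\}$, take $r$ to be the prefix of $\mathbf{s}$ that consumes all non-binary letters and let $\mathbf{b}$ be the resulting binary tail. Otherwise, every infinite cube-free extension of $u$ contains non-binary letters infinitely often; here I would choose a prefix $r$ of some infinite extension ending in a non-binary letter $c$, selected so that no straddling cube forms when $\T$ is appended. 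The essential observation is that a straddling cube in $ur\T$ would have its first two copies of its root $x$ entirely within $ur$ (since $\T$ contains no non-binary letters), forcing $ur$ to end in a specific factor of the form $x^2\cdot x[1..k']$; ruling this out reduces to an arithmetic-progression condition on the non-binary letters of $ur$ ending at position $|ur|$, which can be arranged by judicious choice of $r$.

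Given the binary tail lemma and its dual for $v$, I would fix an integer $N\ge 2(|ur|+|lv|)$, let $\tilde u$ be $ur$ followed by the length-$N$ prefix of $\mathbf{b}$, and $\tilde v$ be a length-$N$ suffix of the binary left context $\mathbf{b}'$ of $v$ followed by $lv$. Both are $d$-ary cube-free; moreover $\tilde u$ ends in a binary right-extendable word and $\tilde v$ begins with a binary left-extendable word. Applying Theorem~\ref{t:uwv} to these binary factors yields a binary transition $w'$ such that the concatenation of the two binary factors with $w'$ between them is binary cube-free. Setting $w$ to be the resulting concatenation gives $uwv=\tilde u\,w'\,\tilde v$, and it remains to check $d$-ary cube-freeness by case analysis. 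Cubes entirely inside $\tilde u$, inside $\tilde v$, or inside the binary bridge are excluded by the corresponding cube-freeness; a cube straddling into a non-binary letter of $\tilde u$ or $\tilde v$ must repeat that non-binary letter three times inside the cube, and since all three occurrences must fall inside non-binary regions, the period $|x|$ must be large compared to $N$, which is ruled out by the choice $N\ge 2(|ur|+|lv|)$.

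The main obstacle is the binary tail lemma, specifically the subcase where every infinite cube-free extension of $u$ contains non-binary letters infinitely often. Showing that the truncation point can be chosen to rule out all straddling cubes requires a delicate combinatorial analysis of the non-binary letter distribution in cube-free extensions of $u$; I expect this to be the technical heart of the section, with the remaining case analysis being routine once the binary tail is secured.
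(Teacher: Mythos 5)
Your outer reduction is essentially the paper's: extend $u$ (and dually $v$) by a finite word ending in a non-binary letter so that what follows can be taken binary, invoke Theorem~\ref{t:uwv} on the binary parts, and kill straddling cubes by noting that a cube containing a non-binary letter must repeat it at distance equal to the period, which the length padding (your $N$, the paper's $\lceil|ux|/2\rceil$) forbids. That part is sound and matches the paper's proof of Theorem~\ref{t:uwv3}.

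The genuine gap is your ``binary tail'' lemma, and specifically its second case. The paper's corresponding statement is Theorem~\ref{t:markers2}: \emph{every} right-extendable $d$-ary cube-free word has an infinite right context with only finitely many letters outside $\{a,b\}$ --- in other words, your Case~2 (all infinite extensions contain non-binary letters infinitely often) is vacuous, but proving that vacuity is the substantial content of the whole reduction, and your sketch does not contain the ideas needed for it. Your proposed treatment --- truncate at a non-binary letter, append $\T$, and rule out straddling cubes by an ``arithmetic-progression condition'' that ``can be arranged by judicious choice of $r$'' --- addresses only a local obstruction, and even that incompletely: if the cube's root is binary, its copies need not lie inside $ur$, so the binary suffix of $ur$ glued to $\T$ can itself create cubes, and handling \emph{that} already requires the binary machinery of Lemmas~\ref{l:uniform}--\ref{l:Text}. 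More fundamentally, the real difficulty is global: it could a priori happen that at every truncation point every binary continuation dies quickly, forcing a new non-binary letter each time, and nothing in your sketch shows this process terminates. The paper's proof supplies exactly the missing termination mechanism: it takes two binary semi-contexts $w_1=waw_1'$, $w_2=wbw_2'$, shows the resulting cubes force equalities among the blocks $y_i$ of the factorization of $u$ by the positions of non-binary letters, iterates the extension, and then applies the Fine--Wilf-based counting bound of Lemma~\ref{l:log} to the words $U,Y$ over the auxiliary alphabet $\Gamma=\{y_m,\ldots,y_{1-k}\}$ to bound the number of iterations, yielding the contradiction. Without some analogue of this counting argument, your Case~2 remains open and the proposal does not close the theorem.
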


As in the binary case, we use an auxiliary theorem about the existence of a context with finitely many markers (but the markers are different now).

\begin{theorem} \label{t:markers2}
Let $d\ge 3$. Every right-extendable cube-free word $u\in\Sigma_d^*$ has an infinite right context with finitely many occurrences of all letters except for $a$ and $b$.
\end{theorem}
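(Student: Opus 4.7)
The plan is to mimic the proof of Theorem~\ref{t:markers}, with individual non-binary letters playing the role of markers. For contradiction, suppose some right-extendable cube-free $u\in\Sigma_d^*$ has every infinite right context containing infinitely many non-binary letters; w.l.o.g.\ extend $u$ along an arbitrary infinite right context to make its last letter a non-binary letter $c$ (possible because every infinite right context eventually contains a non-binary letter).

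The core step is an analog of claim $(*)$: $u$ has two distinct binary semi-contexts $w_1,w_2$ such that each $uw_i$ ends in a cube $x_i^3$ of period $p_i$ containing $c$ at position $|u|$, with $p_1\ne p_2$. I would obtain these using $\T$ and its complement $\overline{\T}$. By the standing assumption, neither is a right context of $u$ (both contain no non-binary letters), so $u\T$ and $u\overline{\T}$ each contain a cube; take $w_i$ to be the shortest prefix inducing one. Since $\T$ and $\overline{\T}$ are cube-free, each cube must straddle $u$ and its binary extension, hence contains position $|u|$. Every non-binary letter of a cube occurs three times in it, all inside $u$ (since the extension is binary); applied to the letter at $|u|$ this forces $u[|u|{-}2p_i]=u[|u|{-}p_i]=u[|u|]=c$. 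The period also gives $w_i[1]=u[|u|{+}1{-}p_i]$, and since $w_1[1]=a\ne b=w_2[1]$, we conclude $p_1\ne p_2$. W.l.o.g.\ $p_1>p_2$, so $x_1$ itself contains $c$ (a ``maxi'' cube).

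Next I would factor $u=y_m\cdots y_1$ where each $y_i$ is a (possibly empty) binary block followed by exactly one non-binary letter, indexed right-to-left. Extend $u$ by a block $y_0$ of the same form (the prefix of any infinite right context up to and including its next non-binary letter) and iterate $k$ times to obtain a cube-free right-extendable word $uy_0\cdots y_{1-k}$. Over the coarsened alphabet $\Gamma=\{y_m,\ldots,y_{1-k}\}$, set $U=y_m\cdots y_1$ and $Y=y_0\cdots y_{1-k}$; then $UY$ is cube-free over $\Gamma$ because any $\Gamma$-cube lifts to a $\Sigma_d$-cube. At step $i$ the key claim applied to $u^{(i)}=UY[1..m+i]$ yields a $\Sigma_d$-cube whose period $p_1^{(i)}$ aligns with the non-binary/$y$-block structure to give a $\Gamma$-period $\hat p_i\ge 2$ (the value $1$ would produce three identical consecutive $y$-blocks, a $\Gamma$-cube in $UY$). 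This supplies $UY[1..m+i]$ with a suffix of period $\hat p_i$ and length $3\hat p_i{-}2$ or $3\hat p_i{-}1$ over $\Gamma$, and $\hat p_i\ne\hat p_{i+1}$ by the standard mismatch argument: the new block $y_{-i}$ cannot begin with $w_1^{(i)}$ (else $u^{(i)}w_1^{(i)}$, which contains a cube, would be a prefix of the cube-free $u^{(i+1)}$), whereas the block $y_{\hat p_i-i}$ required for the $\Gamma$-period to continue does begin with $w_1^{(i)}$. Lemma~\ref{l:log} applied to $UY$ over $\Gamma$ then bounds $k=O(\log m)$, contradicting the unboundedness of the iteration.

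The main obstacle I anticipate is the careful translation between the $\Sigma_d$-cube structure and the $\Gamma$-periodic suffix required by Lemma~\ref{l:log}, in particular rigorously verifying $\hat p_i\ge 2$ and the alignment of the cube's $\Sigma_d$-period with the $y$-block boundaries. A convenient simplification compared with the binary case is that ``mini'' cubes cannot arise: any cube $x^3$ containing a non-binary letter must have that letter inside the root $x$, since every letter occurrence in $x^3$ comes from a position in $x$. Hence the multi-case analysis needed for $(*)$ in the binary proof collapses into the single argument above.
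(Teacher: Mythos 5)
Your proposal mirrors the paper's proof in all structural respects: the contradiction assumption, normalizing $u$ to end with a $c$-letter, producing two binary semi-contexts whose cubes have distinct periods, the block factorization $u=y_m\cdots y_1$ at $c$-letters, the iterated appending of blocks $y_0,\ldots,y_{1-k}$, and the passage to the coarsened alphabet $\Gamma$ followed by Lemma~\ref{l:log}. Your instantiation of the two semi-contexts as shortest cube-inducing prefixes of $\T$ and $\overline{\T}$ is a legitimate concrete variant of the paper's (the paper takes any two diverging binary semi-contexts, which exist because the tree of binary right contexts is finite under the standing assumption); your derivation of $p_1\ne p_2$ from $u[|u|{+}1{-}p_i]=w_i[1]$ with $w_1[1]=a\ne b=w_2[1]$ is sound, as is the observation that every relevant cube is automatically ``maxi''.

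The genuine gap is your justification of $\hat p_i\ge 2$, which is exactly the hypothesis Lemma~\ref{l:log} needs. If $\hat p_i=1$, i.e., the marker matched to the final $c$-letter at distance $p_1$ is the very next marker $z_2$, then $p_1=|y_1|$ and the cube is $(tw_1)^3$ with $t$ the suffix of $u$ of length $p_1-|w_1|$; this forces only $y_2=y_1\,(=w_1t)$ --- two identical consecutive blocks, not three. The reason is that the cube protrudes $|w_1|\ge 1$ letters into the binary extension, so inside $u$ it covers $y_2y_1$ in full but only a proper suffix of what precedes $z_3$; the left end of $y_3$ is fixed by $z_4$, which lies outside the cube, so $y_3$ is unconstrained and no $\Gamma$-cube arises. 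Hence ``the value $1$ would produce three identical consecutive $y$-blocks'' is false, and your only argument for $\hat p_i\ge2$ collapses. The correct argument is the two-cube comparison you set up and then discarded: in $x_2^3$ the marker matched to $z_1$ at distance $p_2$ is some $z_{j''}$ with $j''\ge 2$; since $p_1>p_2$, the marker matched in $x_1^3$ lies strictly further left, so its index $j{+}1$ satisfies $j{+}1>j''\ge 2$, giving $\hat p_i=j\ge 2$. This is precisely how the paper argues (in the proof of Theorem~\ref{t:markers}, inherited by Theorem~\ref{t:markers2}: $j\ge 2$ because a marker with a smaller number matches $z_1$ in $x_2^3$). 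With this substitution your proof goes through; the block-alignment details you flag as the main obstacle are routine and are handled in the paper exactly along the lines you sketch.
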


\begin{proof}
We follow the main idea of the proof of Theorem~\ref{t:markers} and use the same notation. The difference, which actually simplifies the argument, is that the role of markers is now played by the \textit{$c$-letters} $c_1,\ldots, c_{d-2}$. Aiming at a contradiction, assume that all infinite right contexts of some right-extendable cube-free word $u$ contain infinitely many $c$-letters. W.l.o.g. we can assume that $u$ ends with a $c$-letter; we denote this letter by $z$. As in the proof of Theorem~\ref{t:markers}, we let $z_1=z, z_2,\ldots, z_m$ be all markers in $u$, right to left, and factorize $u$ as $u=y_m\cdots y_2y_1$, where $y_i$ begins with the first letter after $z_{i+1}$ and ends with the last letter of $z_i$.

By our assumption, $u$ contains finitely many contexts from $\{a,b\}^*$; then it has two semi-contexts $w_1=waw_1', w_2=wbw_2'\in\{a,b\}^+$ (each of the words $w,w_1,w_2$ may be empty). Let $x_1^3, x_2^3$ be suffixes of $uw_1$ and $uw_2$ respectively, and w.l.o.g. $|x_1|=p_1>p_2=|x_2|$. The suffix $zw_1$ of $x_1$ matches some earlier occurrence of $zw_1$ in $u$; same for the suffix $zw_2$ of $x_2$. As in the proof of Theorem~\ref{t:markers} we see that $z_{j+1}=z_1$ for some $j\ge 2$, $w_1$ is a prefix of $y_j$, and the equalities $y_1=y_{j+1},\ldots, y_{2j-1}=y_{3j-1}$ hold.

Next we extend $u$ to the right by a context $y_0$ such that $uy_0$ is right extendable, $y_0$ ends with a marker $z_0$, and all proper prefixes of $y_0$ are over $\{a,b\}$. Applying all the above argument to $uy_0$ and its factorization $y_m\cdots y_1y_0$, we get another cube $x_0^3$ and the corresponding set of equalities between $y_i$'s. Again, $y_0\ne y_j$, since $y_j$ has the prefix $w_1$, while $y_0$ has not. After iterating the procedure of appending a context $k$ times, we obtain a right-extendable word $uy=y_m\cdots y_1y_0\cdots y_{1-k}$ and consider the words $U=y_m\cdots y_2y_1$, $Y=y_0\cdots y_{1-k}$ over the alphabet $\Gamma=\{y_m,\ldots, y_1,y_0,\ldots, y_{1-k}\}$. They are cube free and $Y$ is a length-$k$ right context of $U$. Each word $UY[1..i]$ ends with a suffix having some period $p_i$ and length $3p_i-1$. In addition, $p_i\ne p_{i+1}$, because $y_{-i}\ne y_{p_i-i}$. So we can apply Lemma~\ref{l:log} to get an upper bound on $k$. The existence of this bound contradicts our assumption that all infinite right contexts of $u$ have infinitely many $c$-letters. Hence $u$ has an infinite context with finitely many $c$-letters, as required.
\end{proof}

\begin{proof}[Proof of Theorem~\ref{t:uwv3}]
By Theorem~\ref{t:markers2}, the word $u\in\Sigma_d^*$ has an infinite right context with finitely many $c$-letters. First we note that we can choose such a context containing a $c$-letter (if a context $\w$ is over $\Sigma_2$, one can get another context of $u$ replacing, say, the letter $\w[|u|]$ with $c_1$). So we can write this context $\w$ as $x\u_1$, where $x$ ends with a $c$-letter and $\u_1\in\Sigma_2^\infty$. Let $u_1$ be the prefix of $\u_1$ of length $\lceil|ux|/2\rceil$. In the same way, we take a right context $\overleftarrow{y}\overleftarrow{\v_1}$ of $\overleftarrow{v}$ and the prefix $\overleftarrow{v_1}\in\{a,b\}^*$ of $\overleftarrow{\v_1}$ of length $\lceil|yv|/2\rceil$. Then the binary words $u_1,v_1$ are cube free, $u_1$ is right extendable, and $v_1$ is left extendable. Applying Theorem~\ref{t:uwv}, we take a binary transition word $w_1$ such that $u_1w_1v_1$ is cube free. Then $s=uxu_1w_1v_1yv$ is cube free. Indeed, $x$ ends with a $c$-letter, $y$ begins with a $c$-letter, and these $c$-letters are separated by a cube-free word over $\Sigma_2$. Hence a cube in $s$, if any, must contain one of these $c$-letters. But the lower bounds on $|u_1|$ and $|v_1|$ imply that this $c$-letter cannot match another $c$-letter to produce a cube (recall that $uxu_1$ and $v_1yv$ are cube free). Thus $s$ is cube-free and we obtain a transition word $xu_1w_1v_1y$ for the pair $(u,v)$.
\end{proof}

\section{Solving the Restivo-Salemi Problems and Future Work}

To give the solutions to the Restivo--Salemi Problems~2,~4,~and~5, recall the solution to Problem~1a \cite{CuSh03}: a $d$-ary $\alpha$-power-free word $u$ is right extendable iff it has a right context of length $f_{\alpha,d}(|u|)$ for some computable function $f_{\alpha,d}$. Algorithm~\ref{alg:4} below solves Problem~4.
%(and asymptotically sublinear)

\begin{algorithm*}
\caption{: Deciding the existence of a transition word for cube-free words $u,v\in\Sigma_d$}
\label{alg:4}
% \begin{algorithmic}[1]
% \For {all words $w\in\Sigma_d$ of length $\le f_{3,d}(|u|)$} 
% \If {$uw$ is cube free}
% \If {$v$ is a suffix of $w$}
% \State{return ``Yes''}
% \EndIf
% \EndIf
% \State {read $S[i]$; compute $I(i+1)$ from $I$; $I\gets I(i+1)$}
% 	\If {$S[v.i..i]$ is a palindrome and $answer.len < i{-}v.i{+}1$}
% 		\State {$answer\gets (v.i, i{-}v.i{+}1)$}
% 	\EndIf
% \EndFor
% \end{algorithmic}
\vspace*{-2mm}
\begin{itemize}
\item For all words $w\in\Sigma_d$ such that $|w|\le f_{3,d}(|u|)$ check whether $w$ is a right context of $u$ 
\item If a context $w$ with the suffix $v$ is found, return ``yes''
\item If no context of length $f_{3,d}(|u|)$ is found, return ``no''
\item Else \hfill  $\triangleright\ u$ is right extendable 
\begin{itemize}
\item For all words $w\in\Sigma_d$ such that $|w|\le f_{3,d}(|v|)$ check whether $w$ is a left context of $v$
\item If a context $w$ with the prefix $u$ is found, return ``yes''
\item If no context of length $f_{3,d}(|v|)$ is found, return ``no''
\item Else return ``yes'' \hfill  $\triangleright\ v$ is left extendable; apply Theorem~\ref{t:uwv} or \ref{t:uwv3}
\end{itemize}
\end{itemize}
\vspace*{-2mm}
\end{algorithm*}

The natural next step is to find an \textit{efficient} algorithm for Problem~4. The function $f_{3,d}(n)$ is sublinear, but the search space is still of size $2^{n^{\Omega(1)}}$. The possible way to a polynomial-time solution is to strengthen the connection with Lemma~\ref{l:log} to show that it is sufficient to process the contexts of length $O(\log n)$, where $n=\max\{|u|,|v|\}$.

\smallskip
For Problem 2, the first step is the reduction to the binary case. Let $u\in\Sigma_d$, $d\ge 3$, be a right-extendable cube-free word; we write $u=u'cu''$, where $c$ is the rightmost $c$-letter in $u$. We check all cube-free words $w\in\Sigma_2^*$ such that $u''w$ is right extendable and $|u''w|=\lceil|u|/2\rceil$ for being right contexts of $u$. If $w$ is a right context of $u$, then any (binary) right context of $u''w$ is a right context of $uw$, so the problem is reduced to binary words. If no word $w$ suits, we take the shortest right context of $u$ of the form $vc$, where $v\in\Sigma_2^*$, $c\in\{c_1,\ldots,c_{d-2}\}$ such that the word $u_1=uvc$ is right extendable; such a context can be found in finite time because $|v|<|u|/2$. Then we replace $u$ by $u_1$ and repeat the search of long binary right contexts. By Theorem~\ref{t:markers2}, we will succeed after a finite number of iterations, and Lemma~\ref{l:log} gives the upper bound on the maximum number $k$ of iterations depending on $|u|$. Thus we end this step getting a word $y\hat u$ such that $uy\hat u$ is cube free, $\hat u\in\Sigma_2^*$, and all binary right contexts of $\hat u$ are right contexts of $uy\hat u$. If $u$ is binary, we skip this step setting $\hat u=u$.

On the second step we further reduce the problem to uniform words. We act as in the first step, using Theorem~\ref{t:markers} and Lemma~\ref{l:uniform2}. Namely, we check for uniform contexts and if $\hat u$ has no uniform context $w$ of length $2|\hat u|+3$ such that $\hat uw$ is right extendable and $w$ has no prefix $ababa/babab$, we append the shortest context $v$ ending with a marker, repeating the search for $\hat u_1=\hat uv$. Theorem~\ref{t:markers} guarantees that we will find the required uniform context in at most $k$ iterations, where $k$ is as in Lemma~\ref{l:log}. Thus at this step we build a right context $\hat y\hat w$ of $\hat u$ such that $\hat u\hat y\hat w$ is right extendable, $|\hat{w}|\ge 2|\hat u\hat y|+3$ and $\hat{w}$ is uniform and has no prefix $ababa/babab$.

Finally we choose, as described in Lemma~\ref{l:uniform2}, a suffix $\T[r..\infty]$ of $\T$ which is a right context of $\hat w$: if $\hat w=\T[i..j]$ for some $i,j$, then we take $r=j+1$, otherwise the choice is performed according to Case 1 in the proof of Lemma~\ref{l:uniform}. Now Lemma~\ref{l:uniform2} guarantees that $\hat u\hat y\hat w\T[r..\infty]$ is cube free. Thus the infinite right context of the original word $u$ is given by the finite word $Y=y\hat u\hat y\hat w$ and the number $r$. The above description is summarized below as Algorithm~\ref{alg:2}.

\begin{algorithm*}
\caption{: Finding an infinite right context of a right-extendable cube-free word $u\in\Sigma_d$}
\label{alg:2}
\vspace*{-2mm}
\begin{itemize}
\item $U\gets u$, $Y\gets \lambda$
\item If $U$ is not binary\hfill  $\triangleright$ first step 
\begin{itemize}
\item While $U$ has no long binary right context
\begin{itemize}
\item Find the shortest right context $vc$ ending with a $c$-letter
\item $U\gets Uvc$, $Y\gets Yvc$
\end{itemize}
\item $\hat u\gets \text{ long binary right context of } U$, $Y\gets Y\hat u$ 
\end{itemize}
\item Else $\hat u=u$
\item While $\hat u$ has no long uniform right context without prefix $ababa/babab$ \hfill  $\triangleright$ second step
\begin{itemize}
\item Find the shortest right context $v$ such that $\hat uv$ ends with a marker
\item $\hat u\gets \hat uv$, $Y\gets Yv$
\end{itemize}
\item $\hat w\gets \text{ long uniform right context of } \hat u$ without prefix $ababa/babab$, $Y\gets Y\hat w$ 
\item Find $r$ such that $\T[r..\infty]$ is a right context of $\hat w$\hfill  $\triangleright$ final step
\item return $Y,r$
\end{itemize}
\end{algorithm*}

Again, the natural direction of the future work is to make Algorithm~\ref{alg:2} efficient. 

Finally we approach Problem~5. We first run Algorithm~\ref{alg:4}, which can provide us with an example of a transition word if $u$ or $\overleftarrow{v}$ is not right extendable. If both $u,\overleftarrow{v}$ are right extendable, we run for each of them Algorithm~\ref{alg:2}, getting $Y_1,Y_2,r_1,r_2$ such that $uY_1\T[r_1..\infty]$ and $\overleftarrow{\T}[\infty..r_2]Y_2v$ are cube free. It remains to use Lemma~\ref{l:tmred}: take big enough $r_1', r_2'$ and find a word $w$ such that $\T[r_1..r_1']w\overleftarrow{\T}[r_2'..r_2]$ is a factor of $\T$ and a transition word for the pair $(uY_1, Y_2v)$; the uniform recurrence of $\T$ ensures that the word $w$ can be found in finite time. Thus $Y_1\T[r_1..r_1']w\overleftarrow{\T}[r_2'..r_2]Y_2$ is the transition word for the pair $(u,v)$, so Problem~5 is solved.

Once again, it is clear that some steps of the above solution can be significantly sped up, so it would be nice to finally get a polynomial-time algorithm for Problem~5 (and thus for Problems 1, 2, 4 as well). From the experimental study we learned that if a length-$n$ cube-free word is not right extendable, then likely not only all its right contexts have the length $O(\log n)$, but the number of such contexts is $O(\log n)$. The proof of this fact would lead to a linear-time solution of Problem~1. 

% \begin{algorithm*}
% \caption{: Finding a transition word for cube-free words $u,v\in\Sigma_d$}
% \label{alg:5}
% \begin{itemize}
% \item For all words $w\in\Sigma_d$ such that $|w|\le f_{3,d}(|u|)$ check whether $w$ is a right context of $u$ 
% \item If a context $w$ with the suffix $v$ is found, return ``yes''
% \item If no context of length $f_{3,d}(|u|)$ is found, return ``no''
% \item Else \hfill  $\triangleright\ u$ is right extendable 
% \begin{itemize}
% \item For all words $w\in\Sigma_d$ such that $|w|\le f_{3,d}(|v|)$ check whether $w$ is a left context of $v$
% \item If a context $w$ with the prefix $u$ is found, return ``yes''
% \item If no context of length $f_{3,d}(|v|)$ is found, return ``no''
% \item Else return ``yes'' \hfill  $\triangleright\ v$ is left extendable; apply Theorem~\ref{t:uwv} or \ref{t:uwv3}
% \end{itemize}
% \end{itemize}
% \end{algorithm*}

Another obvious continuation of the current research is the study of the same problems for other power-free languages. One line is to use Thue-Morse words to solve Problems~2,~4,~and~5 for other binary power-free languages. For example, we are able to extend the results of Section~\ref{ss:red} to $\alpha$-power-free binary words for any $\alpha\in(5/2,3]$, changing only some constants. Another line is to obtain similar results for ternary square-free words, in the absence of such a strong tool as Thue-Morse words.

\bibliographystyle{plainurl}% the recommnded bibstyle
\bibliography{my_bib}

\newcommand{\noopsort}[1]{} \newcommand{\singleletter}[1]{#1}
\begin{thebibliography}{10}

\bibitem{Car84}
A.~Carpi.
\newblock On the centers of the set of weakly square-free words on a two-letter
  alphabet.
\newblock {\em Inform. Process. Lett.}, 19:187--190, 1984.

\bibitem{Car93}
A.~Carpi.
\newblock Overlap-free words and finite automata.
\newblock {\em Theoret. Comput. Sci.}, 115:243--260, 1993.

\bibitem{Cur95}
J.~D. Currie.
\newblock On the structure and extendibility of $k$-power free words.
\newblock {\em European J. Combinatorics}, 16:111--124, 1995.

\bibitem{CuSh03}
J.~D. Currie and R.~O. Shelton.
\newblock The set of $k$-power free words over $\sigma$ is empty or perfect.
\newblock {\em European J. Combinatorics}, 24:573--580, 2003.

\bibitem{FiWi65}
N.~J. Fine and H.~S. Wilf.
\newblock Uniqueness theorems for periodic functions.
\newblock {\em Proc. Amer. Math. Soc.}, 16:109--114, 1965.

\bibitem{GuPr13}
N.~Guglielmi and V.~Protasov.
\newblock Exact computation of joint spectral characteristics of linear
  operators.
\newblock {\em Found. Comput. Math.}, 13:37--97, 2013.

\bibitem{JPB09}
R.~M. Jungers, V.~Y. Protasov, and V.~D. Blondel.
\newblock Overlap-free words and spectra of matrices.
\newblock {\em Theoret. Comput. Sci.}, 410:3670--3684, 2009.

\bibitem{KaSh04}
J.~{Karhum{\"a}ki} and J.~Shallit.
\newblock Polynomial versus exponential growth in repetition-free binary words.
\newblock {\em J. Combin. Theory. Ser. A}, 104:335--347, 2004.

\bibitem{KoRa11}
R.~Kolpakov and M.~Rao.
\newblock On the number of {Dejean} words over alphabets of 5, 6, 7, 8, 9 and
  10 letters.
\newblock {\em Theoret. Comput. Sci.}, 412:6507--6516, 2011.

\bibitem{Och06}
P.~Ochem.
\newblock A generator of morphisms for infinite words.
\newblock {\em RAIRO Inform. Th\'eor. App.}, 40:427--441, 2006.

\bibitem{PeSh12a}
E.~A. Petrova and A.~M. Shur.
\newblock Constructing premaximal binary cube-free words of any level.
\newblock {\em Internat. J. Found. Comp. Sci.}, 23(8):1595--1609, 2012.

\bibitem{PeSh12}
E.~A. Petrova and A.~M. Shur.
\newblock Constructing premaximal ternary square-free words of any level.
\newblock In {\em Proc. 37th Internat. Conf. on Mathematical Foundations of
  Computer Science. MFCS 2012}, volume 7464 of {\em LNCS}, pages 752--763,
  2012.

\bibitem{PeSh15}
E.~A. Petrova and A.~M. Shur.
\newblock On the tree of ternary square-free words.
\newblock In {\em Combinatorics on Words - 10th Internat. Conf. {WORDS} 2015,
  Proceedings}, volume 9304 of {\em LNCS}, pages 223--236. Springer, 2015.

\bibitem{PeSh17}
E.~A. Petrova and A.~M. Shur.
\newblock On the tree of binary cube-free words.
\newblock In {\em Developments in Language Theory - 21st International
  Conference, {DLT} 2017, Proceedings}, volume 10396 of {\em Lecture Notes in
  Computer Science}, pages 296--307. Springer, 2017.

\bibitem{ReSa85a}
A.~Restivo and S.~Salemi.
\newblock Some decision results on non-repetitive words.
\newblock In A.~Apostolico and Z.~Galil, editors, {\em Combinatorial algorithms
  on words}, volume F12 of {\em {NATO ASI} series}, pages 289--295.
  Springer-Verlag, 1985.

\bibitem{ShSh18}
J.~Shallit and A.~Shur.
\newblock Subword complexity and pattern avoidance.
\newblock {\em Theoret. Comput. Sci.}, 2018.
\newblock \href {http://dx.doi.org/10.1016/j.tcs.2018.09.010}
  {\path{doi:10.1016/j.tcs.2018.09.010}}.

\bibitem{Sh09dlt}
A.~M. Shur.
\newblock Two-sided bounds for the growth rates of power-free languages.
\newblock In {\em Proc. 13th Int. Conf. on Developments in Language Theory. DLT
  2009}, volume 5583 of {\em LNCS}, pages 466--477, Berlin, 2009. Springer.

\bibitem{Sh12sem}
A.~M. Shur.
\newblock Deciding context equivalence of binary overlap-free words in linear
  time.
\newblock {\em Semigroup Forum}, 84:447--471, 2012.

\bibitem{Thue06}
A.~Thue.
\newblock {{\"U}ber} unendliche {Zeichenreihen}.
\newblock {\em Norske vid. Selsk. Skr. Mat. Nat. Kl.}, 7:1--22, 1906.

\bibitem{Thue12}
A.~Thue.
\newblock {{\"U}ber} die gegenseitige {Lage} gleicher {Teile} gewisser
  {Zeichenreihen}.
\newblock {\em Norske vid. Selsk. Skr. Mat. Nat. Kl.}, 1:1--67, 1912.

\bibitem{TuSh12}
I.~N. Tunev and A.~M. Shur.
\newblock On two stronger versions of {Dejean's} conjecture.
\newblock In {\em Proc. 37th Internat. Conf. on Mathematical Foundations of
  Computer Science. MFCS 2012}, volume 7464 of {\em LNCS}, pages 801--813,
  Switzerland, 2012. Springer.

\end{thebibliography}

\end{document}